\begin{document}
\title{Linear Classification of Neural Manifolds with Correlated Variability} 
\author{Albert J. Wakhloo}
\affiliation{Center for Computational Neuroscience, Flatiron Institute} \affiliation{Department of Child and Adolescent Psychiatry, New York State Psychiatric Institute}
\author{Tamara J. Sussman}
\affiliation{Department of Child and Adolescent Psychiatry, New York State Psychiatric Institute}
\affiliation{Columbia University Irving Medical College}
\author{SueYeon Chung}
\affiliation{Center for Computational Neuroscience, Flatiron Institute}
\affiliation{Center for Neural Science, New York University}

\begin{abstract}
     Understanding how the statistical and geometric properties of neural activity relate to performance is a key problem in theoretical neuroscience and deep learning. Here, we calculate how correlations between object representations affect the capacity, a measure of linear separability. We show that for spherical object manifolds, introducing correlations between centroids effectively pushes the spheres closer together, while introducing correlations between the axes effectively shrinks their radii, revealing a duality between correlations and geometry with respect to the problem of classification. We then apply our results to accurately estimate the capacity of deep network data.
\end{abstract}

\maketitle

\emph{Introduction:}
Neural networks can learn rich representations of the world. This capacity for representation learning is thought to underlie deep learning's unprecedented success across a wide variety of tasks. However, it is unclear how the geometric and statistical properties of neural network representations shape network performance on common tasks. Recent work addresses this gap by studying the interaction between artificial neural network representations and performance on classification and memorization tasks \cite{rotondo_counting_2020, battista_capacity-resolution_2020, goldt_modeling_2020, farrell2022capacity, biswas_geometric_2022, susman_quality_2021, NEURIPS2019_cfcce062, dahmen_capacity_2020, steinberg2022associative, cohen_soft-margin_2022}, with complementary work in neuroscience studying the interaction between the structure of biological neural network representations and animal behavior \cite{chaudhuri_intrinsic_2019,bernardi_geometry_2020,  chung_neural_2021, sorscher_neural_2022}. Specifically, in \cite{chung_classification_2018, cohen_separability_2020, chung_linear_2016}, the authors introduce the manifold shattering capacity, a measure capturing how easy it is to separate random binary partitions of a set of manifolds with a hyperplane, and express it in terms of the underlying manifold geometry. In this way, network performance on a classification task, as measured by the capacity, can be understood through the geometric structure of the network representations.

\begin{figure}
    \includegraphics[width=8.6cm]{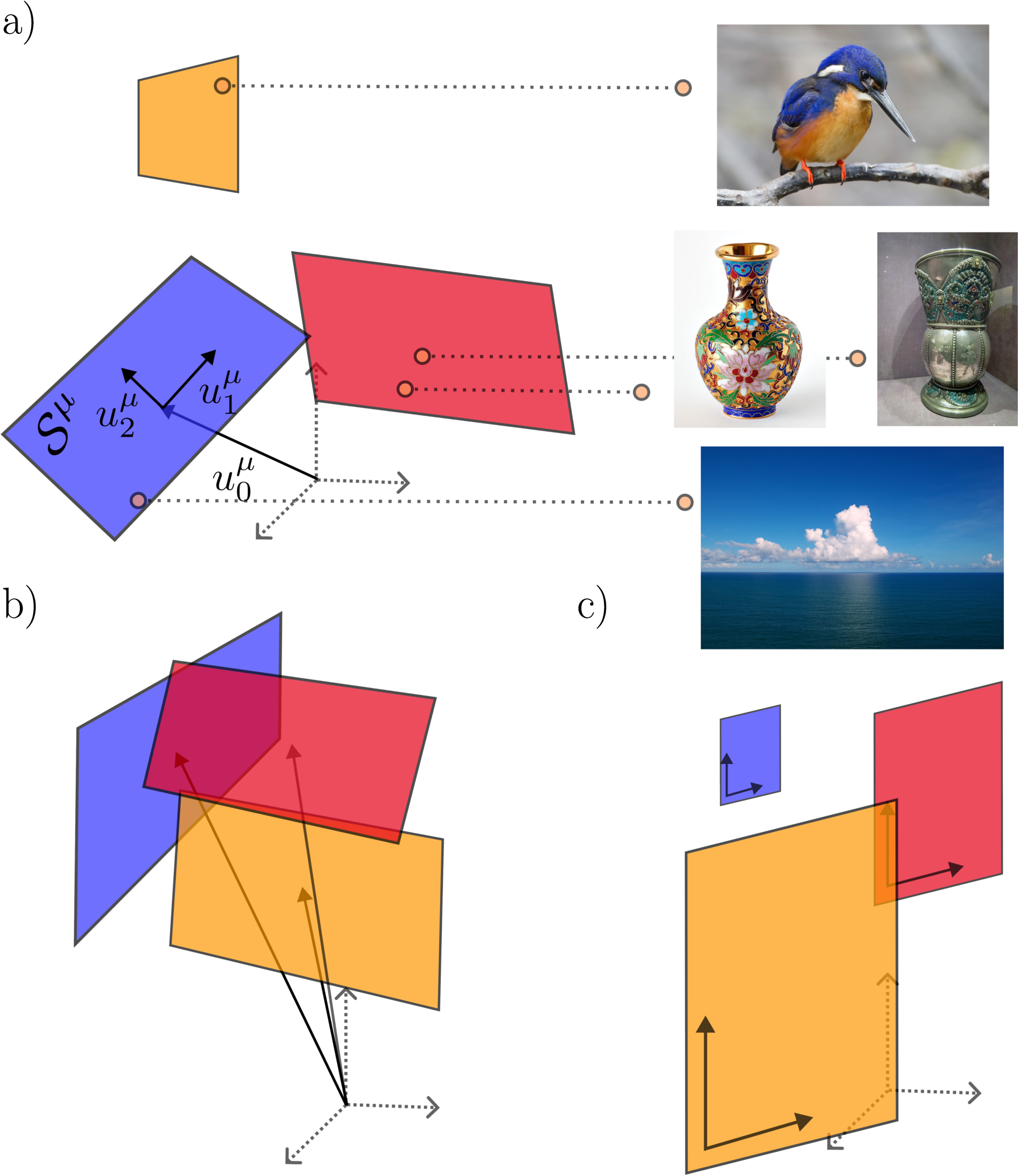}
    \caption{\label{fig:manifold-demo} 
    (a) Neural manifolds arising from different instances of $P=3$ object classes (bird, vase, and cloud \cite{birdimg, vaseimg, chinesevaseimg, skyimg}), with $N=3$ neurons. We parametrize the manifolds in terms of a centroid $u^\mu_0$, axes $u^\mu_{i>0}$, and shape vectors $\mathcal S^\mu$, determining which linear combinations of the axes lie within the manifold. (b) Neural manifolds with correlations in their centroids. (c) Neural manifolds with fully correlated axes. In all three images, different colors correspond to different object class manifolds.}
\end{figure}

Previous works on the manifold capacity have either ignored or coarsely approximated the effects of neural correlations. The best approximation to these effects was reported in \cite{cohen_separability_2020}, where the authors “project out" low-rank correlation structures in manifold centroids. However, the authors find that this approach breaks down when applied to certain artificial network data. Moreover, this approach does not offer analytical insight into the role of different types of correlations in object classification.

Object representations in artificial and biological neural networks exhibit intricate correlation structures, which reflect important properties of the underlying representations \cite{panzeri_structures_2022, zylberberg_robust_2017, morcos2018insights, kornblith2019similarity}. Moreover, as the deep learning community shifts to a self-supervised learning paradigm, many popular loss functions directly enforce particular correlation structures between the latent representations of (possibly augmented) batches of data points \cite{chen2020simple, bardes2022vicreg, zbontar2021barlow, he_momentum_2020}. These considerations call for a theoretical characterization of the relationship between network performance, representational geometry, and the correlation structure of network representations.

In this Letter, we calculate the effects of correlation structures on the capacity. Our formula for the capacity of correlated manifolds generalizes the results in \cite{chung_classification_2018} by stretching the Euclidean norm appearing in previous results in the directions of the eigenvectors of the covariance tensor. We analyze this formula in a simple setting, showing how geometry and correlations interact to determine the capacity, and we go on to apply this formula to accurately estimate the capacity of deep network data.

\emph{Problem statement:} 
Consider a set of $ P $ manifolds, $ M^\mu, $ residing in $ \mathbb R^N $. These manifolds correspond to distinct sets of neuronal activation vectors when presented with different types of stimuli---for example, the set of neural activations for a set of $ P $ classes across all possible class instances in a given layer of an image recognition network [Fig.\;\ref{fig:manifold-demo}(a)]. In what follows, we assume that each manifold resides in an affine subspace of maximal dimension $ K < N.$ That is, for any $ x \in M^\mu $, we have that $ x = u_0^\mu + \sum_{i=1}^K s_i^\mu u^\mu_i $, where $ u_0^\mu $ is a manifold center, $ u_i^\mu $ for $ 1\leq i \leq K $ is a set of manifold axes, and $ s\in \mathcal S^\mu $ are the coordinates of $ x $ with respect to the manifold axes. We use $ \mathcal S^\mu \subset \mathbb R^K $ to denote the set of all possible coordinates in this basis. 

We take the manifold center $u_0^\mu$ to be the average activation of the network layer when presented with a data point from class $ \mu $. The spread of the manifold along the axes therefore corresponds to the network variability as we sample different stimuli from class $ \mu $. Intuitively, manifolds with large centroid norms far away from one another with small spreads along their axes will be easier to classify than large manifolds tightly packed together. 

We now turn to the problem of determining the maximal number of manifolds per dimension, $ \alpha \equiv P/N $, which are, given some random binary labelings $ y^\mu \in \{-1, 1\} $ and some underlying distribution on the $ u^\mu_i $, linearly separable with high probability at a fixed margin $ \kappa$. In what follows, we will be specifically interested in the thermodynamic limit, $N, P \to\infty$ with $P/N=O(1).$ In other words, we find the greatest $ \alpha $ such that there exists a hyperplane with normal $ w \in \mathbb R^N, \;  ||w||^2_2 = N$  satisfying $ \min_{x\in M^\mu}y^\mu \langle w, x \rangle \geq \kappa $ for each manifold $ M^\mu $ with probability 1 in this limit. We define the manifold capacity to be this maximal value of $\alpha$, so that larger capacities imply a more favorable representational geometry for the purpose of classification.  

Following \cite{gardner_space_1988, chung_classification_2018, battista_capacity-resolution_2020, rubin2010theory, schonsberg_efficiency_2021, monasson_properties_1992, Lopez_1995}, we study this problem by calculating the average log-volume of the space of solutions in the thermodynamic limit: 

\begin{gather}
\overline{\log \mathrm{Vol}}=  \overline{\log \int_{\mathbb{S}(\sqrt N)} d^Nw \prod_\mu 
\Theta\bigg( \min_{x \in M^\mu}y^\mu \langle w, x \rangle
- \kappa \bigg)} \; ,
\label{eq:logvol}
\end{gather}

\noindent where $ \mathbb{S}(\sqrt N) $ is the sphere of radius $ \sqrt N $, $\Theta(\cdot)$ is the Heaviside step function, and the average is taken with respect to the quenched disorder in the labels $ y^\mu $ and the axes and centroids $ u_i^\mu $. Viewing the volume as a partition function, we can see that $-N^{-1}\log\mathrm{Vol}$ corresponds to a free energy density, which we assume is self-averaging \cite{mezard_spin_1986}. Given a fixed set of manifold shapes $\mathcal S^\mu $, and choosing the axes and centroids to be independent from one another with $ u_i^\mu \sim \mathcal N(0, N^{-1} I^{(N)}) $, the capacity for such randomly oriented manifolds, $\alpha_{M},$ is given by \cite{chung_classification_2018} 

\begin{gather}
\frac 1 {\alpha_{M}(\kappa)} = \frac 1 P \int D_I T \ \min_{V\in \mathcal A}\sum_{i,\mu} (V_i^\mu - T_i^\mu)^2
\; , 
\label{eq:mf-T}
\end{gather}

\noindent where $D_I T = \prod_{\mu,i} dT^{\mu}_i \exp[-\frac 1 2 (T^\mu_i)^2] / \sqrt{2\pi}$ is an isotropic Gaussian measure and $\mathcal A$ is a convex set of matrices which depends on the geometry of the manifolds, as reflected by their shapes, $\mathcal{S}^\mu$: 

\begin{gather}
\mathcal A \equiv \bigg\{ V \in \mathbb R^{P\times (K+1)}: V^\mu_0 +  \min_{s^\mu \in \mathcal S^\mu} \sum_{i=1}^K V^\mu_i s^\mu_i \geq \kappa \bigg\} .
\label{eq:defA}
\end{gather}

\noindent Note the similarity to the constraint in the $ \Theta $ function in Eq.\;\eqref{eq:logvol} . Indeed, the variable $ V^\mu_i $ corresponds to the inner product of the solution vector $ w $ with the $ i $th axis (or centroid) of the $ \mu$th manifold, multiplied by the label: $V^\mu_i \equiv y^\mu \langle w, u^\mu_i \rangle$. These are the so-called signed fields of the solution vector on the $u^\mu_i$ \cite{chung_classification_2018}. In this way, the capacity can be understood as a function of the geometry of the manifolds as reflected in the set $\mathcal S^\mu.$ In the special case that the manifolds are simply randomly oriented points, the capacity is given by \cite{gardner_space_1988}

\begin{gather}
\frac 1 {\alpha_{point}(\kappa)} = \int_{-\infty}^\kappa \frac{d\xi}{\sqrt{2\pi}} e^{-\frac 1 2 \xi^2}(\xi-\kappa)^2 \; .
\label{eq:cap-pts}
\end{gather}

\noindent From this formula, we can see that the shape sets $\mathcal S^\mu$ cause a lower capacity when compared to that of points.

\emph{Replica theory for correlated manifolds:} 
Here, we consider the situation where manifold axes and centroids are correlated with one another. Intuitively, this corresponds to the fact that different classes in a dataset may be more or less similar to one another in the neural representation space. We enforce correlated axes and centroids by assuming that $ \overline{\langle u^\mu_i,  u^\nu_j \rangle} =C^{\mu,i}_{\nu,j} $ for some positive definite covariance tensor $ C^{\mu,i}_{\nu,j} $. This is done by placing a Gaussian distribution on the centroids and axes: $p(u) \propto \exp\big[- \frac N 2 \sum_{\mu,\nu,i,j,l} (C^{-1})^{\mu,i}_{\nu,j} u^\mu_{i,l} u^\nu_{j,l}\big] $.

We calculate the capacity for correlated manifolds using the replica method \cite{mezard_spin_1986, mezard_information_2009}; the details can be found in the Supplementary Material (SM) \cite{note:SM}. We find that the capacity at a margin $ \kappa $, denoted by $ \alpha_{cor}(\kappa) $, is

\begin{gather}
\frac 1 {\alpha_{cor}(\kappa)} = \frac 1 P \overline{\int D_{y,C}T \min_{V\in \mathcal A}
||V - T||^2_{y, C}}\;,
\label{eq:general-T}
\end{gather}

\noindent where $ D_{y,C} T $ is the zero-mean Gaussian measure with covariance tensor $ y^\mu y^\nu C^{\mu,i}_{\nu,j} $, and the overline denotes the remaining average with respect to the labels $ y^\mu $. Note too that we have defined the Mahalanobis norm: $||X||^2_{y,C} \equiv \sum_{\mu,\nu,i,j} y^\mu y^\nu \big(C^{-1}\big)^{\mu,i}_{\nu,j} X^{\mu}_i X^\nu_j,$ which effectively stretches the Frobenius norm along the eigenvectors of the tensor $y^\mu y^\nu C^{\mu,i}_{\nu,j}$ (Fig.\;\ref{fig:comp-fig}).

\begin{figure}
\includegraphics[width=8.6cm]{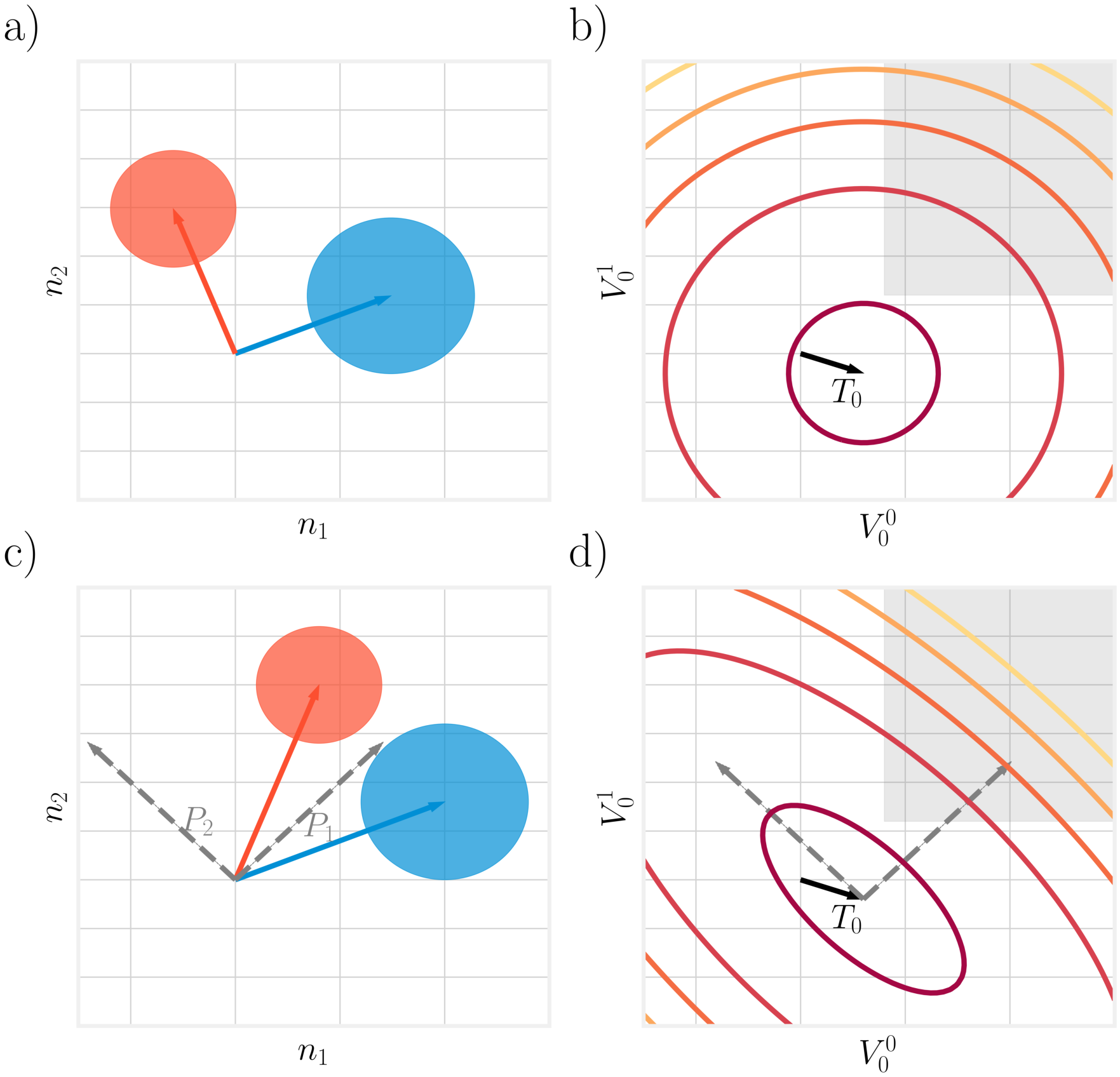}
\caption{\label{fig:comp-fig} The effect of correlations on the optimization landscape for $V_0$. \emph{First column:} Two manifolds with (a) uncorrelated and (c) correlated centroids arising from the activations of two neurons, $n_1$ and $n_2$. \emph{Second column:} Level curves for $||V-T||^2_{y,C}$, given fixed $y$ and $ V_{i>0}$ for the (b) uncorrelated and (d) correlated manifolds. Shaded regions correspond to areas where the constraint is satisfied---i.e., sections of the set $\mathcal{A}$ in Eq.\;\eqref{eq:defA}. Clearly, correlations warp the optimization landscape along the eigenvectors $P_{1}, P_2$ of the centroid covariance matrix with off-diagonal sign flips $y^\mu y^\nu C^{\mu, 0}_{\nu, 0}$.}
\end{figure} 

\emph{Comparison with other capacity estimators:}
It is worth pausing and comparing Eq.\;\eqref{eq:general-T}  to the solution for uncorrelated manifolds in Eq.\;\eqref{eq:mf-T} reported in \cite{chung_classification_2018, cohen_separability_2020}. From Eqs.\;\eqref{eq:mf-T} and \eqref{eq:general-T}, we can see that axes and centroid correlations distort the norm in the minimization from the Euclidean norm to a random Mahalanobis norm which depends on the covariance tensor $ C $ and the random labels $y^\mu$ (Fig.\;\ref{fig:comp-fig}). As such, we expect that the quality of the $\alpha_M$ estimator from Eq.\;\eqref{eq:mf-T} degrades as the manifold axes and centroids become more correlated with one another. We find that this is the case for both $\alpha_M$ and the low-rank approximation method reported in \cite{cohen_separability_2020} when applied to Gaussian point cloud manifolds (Fig.\;\ref{fig:cloud-comparison}). Therefore, the correlated capacity estimator, $\alpha_{cor}$, whose numerical implementation we describe in the SM \cite{note:SM}, should be used whenever working with manifolds with strong correlations (see \cite{note:repository}). 

\emph{The special case of spheres:}
We now look for an answer to the problem we were originally interested in: What are the effects of manifold correlations on the capacity? We answer this question by analytically solving Eq.\;\eqref{eq:general-T}  in a simple setting: $K$-dimensional spheres with homogeneous axis and centroid correlations. More precisely, we assume that the manifold shape sets $ \mathcal{S}^\mu $ are spheres of radius $ 1 $, and the covariance tensor $ C $ is defined by 

\begin{gather}
\label{eq:covar}
C^{\mu,i}_{\nu,j} \equiv 
\begin{cases}
\delta_{i,j}[(1-\lambda) \delta_{\mu,\nu} + \lambda] 
& \text{for } i, j >0
\\
(1 - \psi)\delta_{\mu,\nu} + \psi 
& \text{for } i, j = 0 
\\
 0 & \text{for } i>0, j=0 \; ,
\end{cases}
\end{gather}

\begin{figure}
\includegraphics[width=8.6cm]{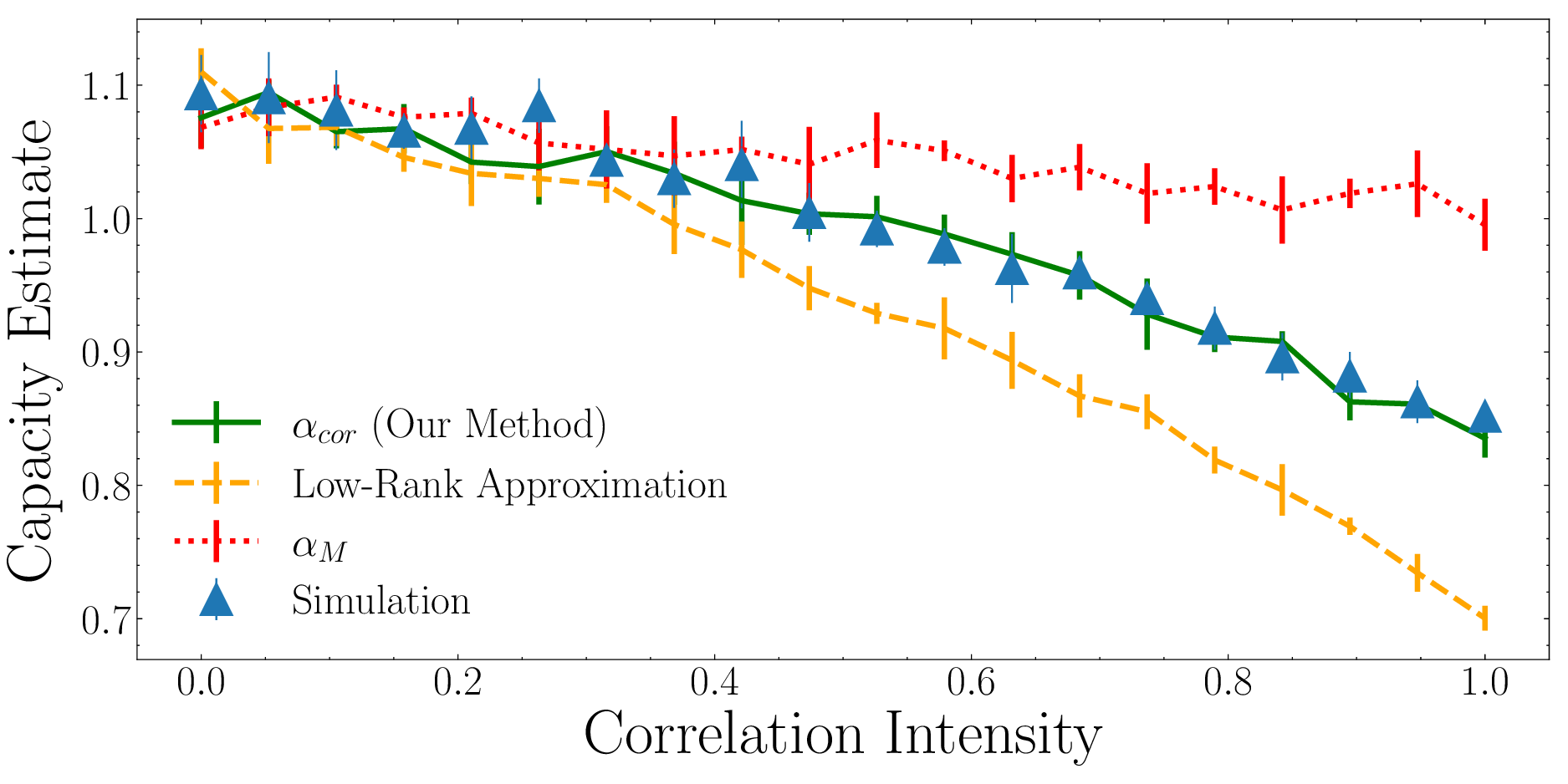}
\caption{\label{fig:cloud-comparison} 
Comparison of three different capacity estimators, including the low-rank approximation of \cite{cohen_separability_2020}, to the numerically estimated ground truth simulation capacity (blue triangles) described in \cite{chung_learning_2018}. The correlation intensity denotes the magnitude of the off-diagonal correlations---see SM for more details \cite{note:SM}.  
}
\end{figure} 

\noindent where $0 \leq \psi, \lambda < 1$. The average centroid norms and sphere radii are then respectively controlled by the scalars $r_0$ and $r$, so that for all $ \mu $ and $ x\in M^\mu $, we have that $ x= r_0 u^\mu_0 + r \sum_{i=1}^K s_i u_i^\mu $, with $ \sum_i (s_i)^2 \leq 1 $. The variables $\lambda, \psi$ respectively determine the degree of correlation between the axes and centroids: As $\lambda, \psi \to 1$, the axes and centroids will be fully correlated with one another, while $\lambda,\psi \to 0$ implies randomly oriented axes and centroids [Fig.\;\ref{fig:sphere-comparison}(a)]. 

Even under these simplifying assumptions, the minimization in Eq.\;\eqref{eq:general-T} is not directly solvable. As such, we reframe the problem in terms of a statistical mechanical system with quenched disorder and study the limit $P\to\infty$. To do this, note that the constraint on the fields can be rewritten as $ r_0 V^\mu_0 -r\sqrt{\sum_{i>0}(V_i^\mu)^2 } \geq \kappa $, as can be seen by applying the Karush-Kuhn-Tucker (KKT) conditions \cite{boyd2004convex} to the Lagrangian $ \mathcal L(S, \eta) = r\sum_{i>0}V_i^\mu S_i + \eta (||S||^2 - 1) $ (see SM \cite{note:SM}). The capacity can then be derived by studying the following Gibbs measure: 

\begin{align}
\frac 1 Z \exp\bigg[&-\frac \beta 2\sum_{i,j,\mu,
\nu} y^\mu y^\nu \big(C^{-1}\big)^{\mu, i}_{\nu, j}(V^\mu_i - T^\mu_i)(V^\nu_j - T^\nu_j)\bigg]
\nonumber
\\
&\times \prod_\mu \Theta\bigg(r_0 V^\mu_0 - r\sqrt{\sum_{i>0} (V_i^\mu)^2} - \kappa\bigg)
dV^\mu \; , 
\label{eq:gibbs}
\end{align}

\noindent where $ Z $ is the partition function \cite{gardner1988optimal}. We can see that $1/ \alpha_{cor}(\kappa) $ is then given by the average energy in the zero-temperature limit: $
[{\alpha_{cor}(\kappa)}]^{-1} = -\frac 2 P \lim_{\beta\to\infty} \frac \partial {\partial\beta} \overline{\log Z }$, with the overline denoting the average with respect to the $ T $ and the labels $ y^\mu $. We calculate the resulting free energy density using the replica method---see SM for details \cite{note:SM}. 

Under these assumptions, the capacity is given by 

\begin{gather} 
\frac {1}{\alpha_{cor}(\kappa)} =  K \big(\sqrt q - 1\big)^2  + \int_{-\infty}^{\hat\kappa(q)} \frac{d\xi}{\sqrt{2\pi}} 
e^{
-\frac 1 2\xi^2} 
\big(\xi  - \hat\kappa(q)\big)^2  , 
\label{eq:spheres}
\end{gather}

\noindent where $ q $ is the scaled squared norm of the signed fields of an arbitrary sphere, $ q\equiv \overline{{\sum_{i>0} (V^\mu_i)^2}}/{(K(1 - \lambda))} $, and $\hat \kappa(q)$ is an effective margin. The values of the $q$ and $ \hat \kappa(q)$ are then fixed by the self-consistent equations 

\begin{gather}
\sqrt{q}  =  1 +  \frac{r\sqrt{1 - \lambda}}{r_0\sqrt{K(1 - \psi)}}
\int_{-\infty}^{\hat\kappa(q)}\frac{d\xi}{\sqrt{2\pi}} 
e^{
-\frac 1 2\xi^2} 
\big(\xi  - \hat\kappa(q) \big) \; ,
\nonumber
\\
\hat\kappa(q) = \frac{r\sqrt{K(1-\lambda)q} + \kappa}{r_0\sqrt{1 - \psi}} \; .
\label{eq:self-const}
\end{gather}

\begin{figure}
\includegraphics[width=8.6cm]{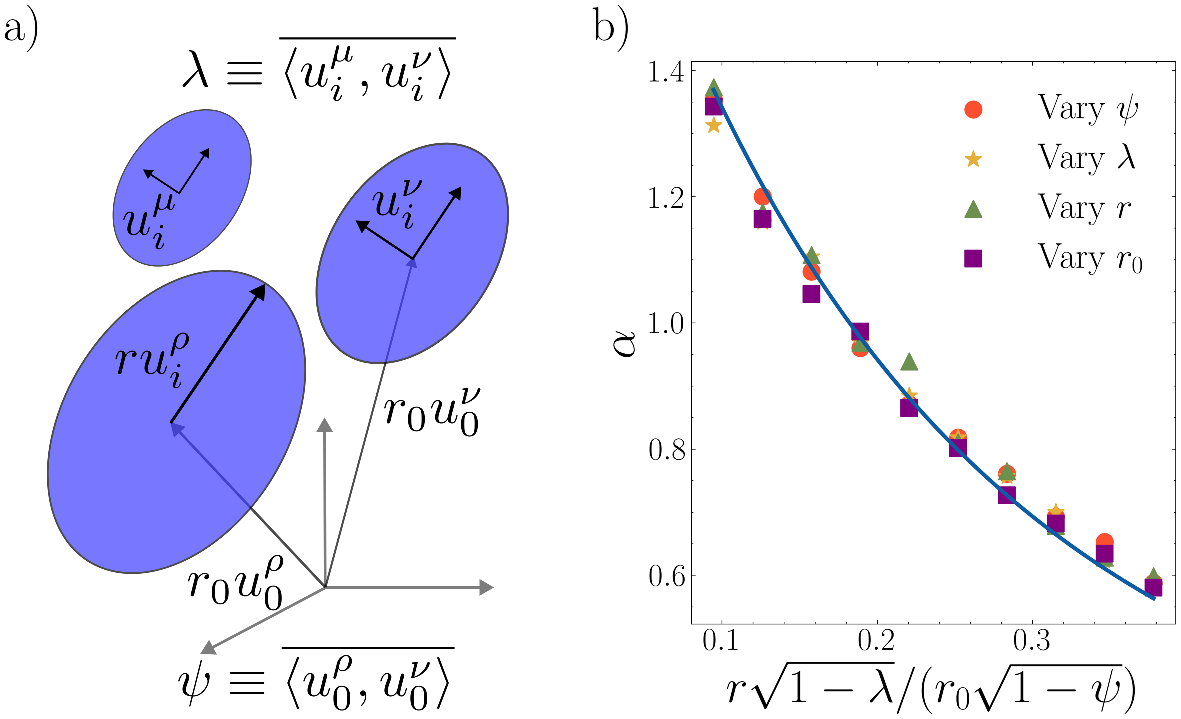}
\caption{\label{fig:sphere-comparison} 
The capacity for correlated spheres. (a) Visual demonstration of spherical manifolds with low-rank axis and centroid correlations. (b) The zero-margin capacity as a function of only the input ratio $r\sqrt{1-\lambda}/(r_0\sqrt{1-\psi})$. Points represent averages over five random sphere samplings, and the solid line represents the theoretical prediction. For each experiment, we fix three of the four parameters and vary the remaining one to obtain a fixed value of the ratio.
}

\end{figure} 

\noindent With our definition of $ q $ and $\hat\kappa (q)$ in hand, we can see that the capacity for correlated spheres is the same as the capacity of random points given in Eq.\;\eqref{eq:cap-pts} with an effective margin of $ \hat\kappa(q) $, plus an extra bias term which corresponds to additional contributions to the capacity from the correlations and spread of the spheres.

The above solution gives a direct view into the effects of correlations on manifold separability. From Eqs.\;\eqref{eq:spheres} and \eqref{eq:self-const}, we can see that when $\kappa=0$, both $q$ and the effective margin are fully determined by the ratio $r\sqrt{(1-\lambda)}/(r_0\sqrt{1-\psi})$ (Fig.\;\ref{fig:sphere-comparison}). Even when $\kappa\neq 0$, the sphere radii and centroid scalings, $ r, r_0 $, and the respective correlations, $ \lambda, \psi, $ only affect the capacity through the products: $r\sqrt{1-\lambda}$, $r_0\sqrt{1-\psi}$. This implies that increasing the axis or centroid correlations affects the capacity in the same way as shrinking the spheres or centroid norms does. That is, axis correlations effectively shrink the sphere radii, while centroid correlations effectively push the manifolds closer to the origin. 

These effects are most dramatic when we consider the limits of fully correlated manifolds. In the fully correlated centroids limit, $ \psi \to 1$, we can see that the capacity falls to 0. Conversely, in the fully correlated axes limit, $ \lambda\to1 $, we can see that $ \sqrt{q} \to 1 $, so that the capacity grows to the capacity for random points with margin $ \kappa / (r_0 \sqrt{1 -\psi})$ \cite{gardner_space_1988}. This shows that high-dimensional, fully correlated spheres are as easy to separate as randomly oriented points---see \cite{farrell2022capacity, Lopez_1995} for related results. 

\begin{figure}
\includegraphics[width=8.6cm]{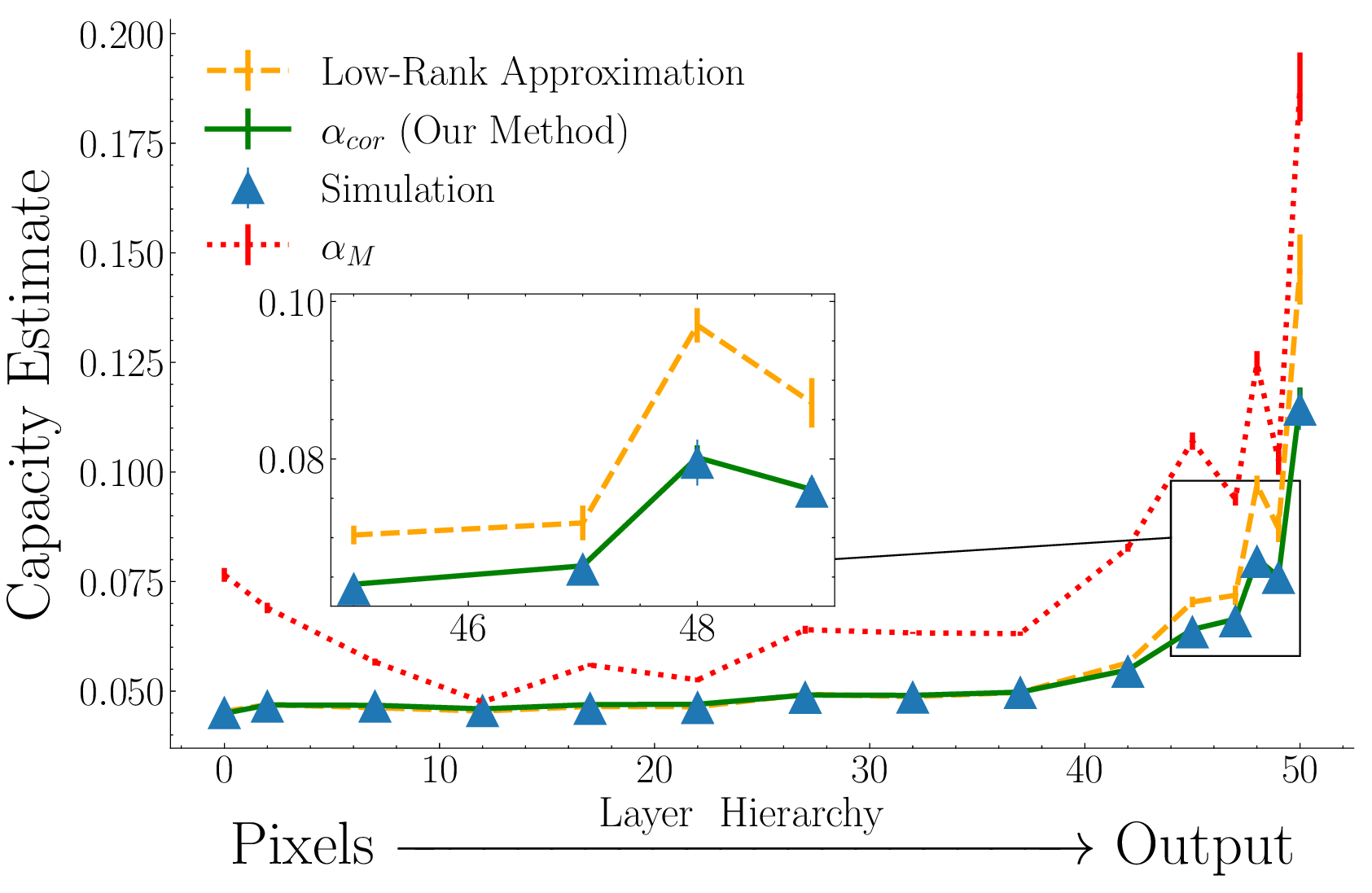}
\caption{\label{fig:simclr} 
Comparison of the low-rank approximation (yellow dashed line) \cite{cohen_separability_2020}, $\alpha_M$ (red dotted line) \cite{chung_classification_2018}, and our $\alpha_{cor}$ calculation (green solid line) to the ground truth simulation capacity (blue triangles) \cite{chung_linear_2016} on data manifolds arising from the ResNet50 artificial neural network architecture trained using SimCLR on the ImageNet dataset \cite{chen2020simple, he2016deep}.}
\end{figure} 

\emph{Application to deep network manifolds:} Having studied our theoretical predictions in two simple settings, we now consider the performance of our capacity estimator, $\alpha_{cor}$, when applied to neural manifolds from a pretrained SimCLR ResNet50 network on the ImageNet dataset \cite{chen2020simple, ILSVRC15, he2016deep}. We can see from Fig.\;\ref{fig:simclr} that the low rank approximation \cite{cohen_separability_2020} significantly overestimates the capacity in later layers of the network. Note that while we can numerically estimate the ground truth simulation capacity here because we use few data points (see SM; \cite{note:SM}), this is computationally infeasible for larger data manifolds \cite{cohen_separability_2020, chung_learning_2018}. Thus, our $\alpha_{cor}$ estimator can be used to estimate the capacity where other methods fail. 

\emph{Discussion:} 
In this Letter, we considered the problem of linearly separating a set of high-dimensional manifolds whose centroids and axes are correlated with one another. We first derived an expression for the capacity of general manifolds with arbitrary covariance tensors. After showing that the resulting expression outperforms previous capacity estimators when presented with correlated manifolds, we turned to the problem of interpreting the resulting expression for the capacity. To this end, we considered the problem of linearly separating spheres with homogeneous correlations along the centroids and axes. The resulting expression for the capacity closely tracks the capacity for points with an effective margin determined by the geometry and correlations of the spheres. Remarkably, we found that centroid and axis correlations play the same roles as the distance of the spheres from the origin and the sphere radii, respectively. These findings reveal a duality between representational geometry and correlations with respect to the problem of classification. 

Our work suggests two main subsequent lines of research. First, given the rising popularity and sophistication of geometric analysis methods in neuroscience \cite{chaudhuri_intrinsic_2019, bernardi_geometry_2020, chung_neural_2021, cohen_separability_2020, chung_classification_2018}, together with the extensive literature examining the phenomenology and role of different types of neural correlations \cite{panzeri_structures_2022, zylberberg_robust_2017}, we hope to apply the results from this study to further connect these two lines of inquiry. One particularly interesting approach in this direction would be to apply our results to study the relationship between hierarchical correlation structures, geometry, and the organization of abstract knowledge, especially in the context of multilabel classification \cite{saxe_mathematical_2019, johnston2023abstract, bernardi_geometry_2020}. Another interesting approach would be to use Eq.\;\eqref{eq:general-T} to  derive a set of metrics quantifying the effects of different types of neural correlations on the capacity for arbitrary data manifolds, complementing preexisting measures describing the impact of geometry on the capacity \cite{chung_classification_2018, cohen_separability_2020}. 

Second, our results regarding spheres with correlated axes suggest that self-supervised objectives which produce positive correlations between manifold axes could yield latent representations with favorable classification properties. If we further define manifold axes using the translation between an original image and its augmentation, such an objective could also produce representations which are disentangled with respect to, for example, color distortion and rotation \cite{higgins2017betavae, higgins_towards_2018}. We hope to pursue this line of research in subsequent work. 

\begin{acknowledgments}
    {\bf{Acknowledgments: }}The authors thank Abdulkadir Canatar and Chi-Ning Chou for their comments on an earlier version of this manuscript.
\end{acknowledgments}


\begin{thebibliography}{47}%
\makeatletter
\providecommand \@ifxundefined [1]{%
 \@ifx{#1\undefined}
}%
\providecommand \@ifnum [1]{%
 \ifnum #1\expandafter \@firstoftwo
 \else \expandafter \@secondoftwo
 \fi
}%
\providecommand \@ifx [1]{%
 \ifx #1\expandafter \@firstoftwo
 \else \expandafter \@secondoftwo
 \fi
}%
\providecommand \natexlab [1]{#1}%
\providecommand \enquote  [1]{``#1''}%
\providecommand \bibnamefont  [1]{#1}%
\providecommand \bibfnamefont [1]{#1}%
\providecommand \citenamefont [1]{#1}%
\providecommand \href@noop [0]{\@secondoftwo}%
\providecommand \href [0]{\begingroup \@sanitize@url \@href}%
\providecommand \@href[1]{\@@startlink{#1}\@@href}%
\providecommand \@@href[1]{\endgroup#1\@@endlink}%
\providecommand \@sanitize@url [0]{\catcode `\\12\catcode `\$12\catcode
  `\&12\catcode `\#12\catcode `\^12\catcode `\_12\catcode `\%12\relax}%
\providecommand \@@startlink[1]{}%
\providecommand \@@endlink[0]{}%
\providecommand \url  [0]{\begingroup\@sanitize@url \@url }%
\providecommand \@url [1]{\endgroup\@href {#1}{\urlprefix }}%
\providecommand \urlprefix  [0]{URL }%
\providecommand \Eprint [0]{\href }%
\providecommand \doibase [0]{https://doi.org/}%
\providecommand \selectlanguage [0]{\@gobble}%
\providecommand \bibinfo  [0]{\@secondoftwo}%
\providecommand \bibfield  [0]{\@secondoftwo}%
\providecommand \translation [1]{[#1]}%
\providecommand \BibitemOpen [0]{}%
\providecommand \bibitemStop [0]{}%
\providecommand \bibitemNoStop [0]{.\EOS\space}%
\providecommand \EOS [0]{\spacefactor3000\relax}%
\providecommand \BibitemShut  [1]{\csname bibitem#1\endcsname}%
\let\auto@bib@innerbib\@empty
\bibitem [{\citenamefont {Rotondo}\ \emph {et~al.}(2020)\citenamefont
  {Rotondo}, \citenamefont {Lagomarsino},\ and\ \citenamefont
  {Gherardi}}]{rotondo_counting_2020}%
  \BibitemOpen
  \bibfield  {author} {\bibinfo {author} {\bibfnamefont {P.}~\bibnamefont
  {Rotondo}}, \bibinfo {author} {\bibfnamefont {M.~C.}\ \bibnamefont
  {Lagomarsino}},\ and\ \bibinfo {author} {\bibfnamefont {M.}~\bibnamefont
  {Gherardi}},\ }\bibfield  {title} {\bibinfo {title} {Counting the learnable
  functions of geometrically structured data},\ }\href
  {https://doi.org/10.1103/PhysRevResearch.2.023169} {\bibfield  {journal}
  {\bibinfo  {journal} {Physical Review Research}\ }\textbf {\bibinfo {volume}
  {2}},\ \bibinfo {pages} {023169} (\bibinfo {year} {2020})}\BibitemShut
  {NoStop}%
\bibitem [{\citenamefont {Battista}\ and\ \citenamefont
  {Monasson}(2020)}]{battista_capacity-resolution_2020}%
  \BibitemOpen
  \bibfield  {author} {\bibinfo {author} {\bibfnamefont {A.}~\bibnamefont
  {Battista}}\ and\ \bibinfo {author} {\bibfnamefont {R.}~\bibnamefont
  {Monasson}},\ }\bibfield  {title} {\bibinfo {title} {Capacity-{Resolution}
  {Trade}-{Off} in the {Optimal} {Learning} of {Multiple} {Low}-{Dimensional}
  {Manifolds} by {Attractor} {Neural} {Networks}},\ }\href
  {https://doi.org/10.1103/PhysRevLett.124.048302} {\bibfield  {journal}
  {\bibinfo  {journal} {Physical Review Letters}\ }\textbf {\bibinfo {volume}
  {124}},\ \bibinfo {pages} {048302} (\bibinfo {year} {2020})}\BibitemShut
  {NoStop}%
\bibitem [{\citenamefont {Goldt}\ \emph {et~al.}(2020)\citenamefont {Goldt},
  \citenamefont {Mézard}, \citenamefont {Krzakala},\ and\ \citenamefont
  {Zdeborová}}]{goldt_modeling_2020}%
  \BibitemOpen
  \bibfield  {author} {\bibinfo {author} {\bibfnamefont {S.}~\bibnamefont
  {Goldt}}, \bibinfo {author} {\bibfnamefont {M.}~\bibnamefont {Mézard}},
  \bibinfo {author} {\bibfnamefont {F.}~\bibnamefont {Krzakala}},\ and\
  \bibinfo {author} {\bibfnamefont {L.}~\bibnamefont {Zdeborová}},\ }\bibfield
   {title} {\bibinfo {title} {Modeling the {Influence} of {Data} {Structure} on
  {Learning} in {Neural} {Networks}: {The} {Hidden} {Manifold} {Model}},\
  }\href {https://doi.org/10.1103/PhysRevX.10.041044} {\bibfield  {journal}
  {\bibinfo  {journal} {Physical Review X}\ }\textbf {\bibinfo {volume} {10}},\
  \bibinfo {pages} {041044} (\bibinfo {year} {2020})}\BibitemShut {NoStop}%
\bibitem [{\citenamefont {Farrell}\ \emph {et~al.}(2022)\citenamefont
  {Farrell}, \citenamefont {Bordelon}, \citenamefont {Trivedi},\ and\
  \citenamefont {Pehlevan}}]{farrell2022capacity}%
  \BibitemOpen
  \bibfield  {author} {\bibinfo {author} {\bibfnamefont {M.}~\bibnamefont
  {Farrell}}, \bibinfo {author} {\bibfnamefont {B.}~\bibnamefont {Bordelon}},
  \bibinfo {author} {\bibfnamefont {S.}~\bibnamefont {Trivedi}},\ and\ \bibinfo
  {author} {\bibfnamefont {C.}~\bibnamefont {Pehlevan}},\ }\bibfield  {title}
  {\bibinfo {title} {Capacity of group-invariant linear readouts from
  equivariant representations: How many objects can be linearly classified
  under all possible views?},\ }in\ \href
  {https://openreview.net/forum?id=_4GFbtOuWq-} {\emph {\bibinfo {booktitle}
  {International Conference on Learning Representations}}}\ (\bibinfo {year}
  {2022})\BibitemShut {NoStop}%
\bibitem [{\citenamefont {Biswas}\ and\ \citenamefont
  {Fitzgerald}(2022)}]{biswas_geometric_2022}%
  \BibitemOpen
  \bibfield  {author} {\bibinfo {author} {\bibfnamefont {T.}~\bibnamefont
  {Biswas}}\ and\ \bibinfo {author} {\bibfnamefont {J.~E.}\ \bibnamefont
  {Fitzgerald}},\ }\bibfield  {title} {\bibinfo {title} {Geometric framework to
  predict structure from function in neural networks},\ }\href
  {https://doi.org/10.1103/PhysRevResearch.4.023255} {\bibfield  {journal}
  {\bibinfo  {journal} {Physical Review Research}\ }\textbf {\bibinfo {volume}
  {4}},\ \bibinfo {pages} {023255} (\bibinfo {year} {2022})}\BibitemShut
  {NoStop}%
\bibitem [{\citenamefont {Susman}\ \emph {et~al.}(2021)\citenamefont {Susman},
  \citenamefont {Mastrogiuseppe}, \citenamefont {Brenner},\ and\ \citenamefont
  {Barak}}]{susman_quality_2021}%
  \BibitemOpen
  \bibfield  {author} {\bibinfo {author} {\bibfnamefont {L.}~\bibnamefont
  {Susman}}, \bibinfo {author} {\bibfnamefont {F.}~\bibnamefont
  {Mastrogiuseppe}}, \bibinfo {author} {\bibfnamefont {N.}~\bibnamefont
  {Brenner}},\ and\ \bibinfo {author} {\bibfnamefont {O.}~\bibnamefont
  {Barak}},\ }\bibfield  {title} {\bibinfo {title} {Quality of internal
  representation shapes learning performance in feedback neural networks},\
  }\href {https://doi.org/10.1103/PhysRevResearch.3.013176} {\bibfield
  {journal} {\bibinfo  {journal} {Physical Review Research}\ }\textbf {\bibinfo
  {volume} {3}},\ \bibinfo {pages} {013176} (\bibinfo {year}
  {2021})}\BibitemShut {NoStop}%
\bibitem [{\citenamefont {Ansuini}\ \emph {et~al.}(2019)\citenamefont
  {Ansuini}, \citenamefont {Laio}, \citenamefont {Macke},\ and\ \citenamefont
  {Zoccolan}}]{NEURIPS2019_cfcce062}%
  \BibitemOpen
  \bibfield  {author} {\bibinfo {author} {\bibfnamefont {A.}~\bibnamefont
  {Ansuini}}, \bibinfo {author} {\bibfnamefont {A.}~\bibnamefont {Laio}},
  \bibinfo {author} {\bibfnamefont {J.~H.}\ \bibnamefont {Macke}},\ and\
  \bibinfo {author} {\bibfnamefont {D.}~\bibnamefont {Zoccolan}},\ }\bibfield
  {title} {\bibinfo {title} {Intrinsic dimension of data representations in
  deep neural networks},\ }in\ \href
  {https://proceedings.neurips.cc/paper/2019/file/cfcce0621b49c983991ead4c3d4d3b6b-Paper.pdf}
  {\emph {\bibinfo {booktitle} {Advances in neural information processing
  systems}}},\ Vol.~\bibinfo {volume} {32},\ \bibinfo {editor} {edited by\
  \bibinfo {editor} {\bibfnamefont {H.}~\bibnamefont {Wallach}}, \bibinfo
  {editor} {\bibfnamefont {H.}~\bibnamefont {Larochelle}}, \bibinfo {editor}
  {\bibfnamefont {A.}~\bibnamefont {Beygelzimer}}, \bibinfo {editor}
  {\bibfnamefont {F.}~\bibnamefont {dAlché Buc}}, \bibinfo {editor}
  {\bibfnamefont {E.}~\bibnamefont {Fox}},\ and\ \bibinfo {editor}
  {\bibfnamefont {R.}~\bibnamefont {Garnett}}}\ (\bibinfo  {publisher} {Curran
  Associates, Inc.},\ \bibinfo {year} {2019})\BibitemShut {NoStop}%
\bibitem [{\citenamefont {Dahmen}\ \emph {et~al.}(2020)\citenamefont {Dahmen},
  \citenamefont {Gilson},\ and\ \citenamefont {Helias}}]{dahmen_capacity_2020}%
  \BibitemOpen
  \bibfield  {author} {\bibinfo {author} {\bibfnamefont {D.}~\bibnamefont
  {Dahmen}}, \bibinfo {author} {\bibfnamefont {M.}~\bibnamefont {Gilson}},\
  and\ \bibinfo {author} {\bibfnamefont {M.}~\bibnamefont {Helias}},\
  }\bibfield  {title} {\bibinfo {title} {Capacity of the covariance
  perceptron},\ }\href {https://doi.org/10.1088/1751-8121/ab82dd} {\bibfield
  {journal} {\bibinfo  {journal} {Journal of Physics A: Mathematical and
  Theoretical}\ }\textbf {\bibinfo {volume} {53}},\ \bibinfo {pages} {354002}
  (\bibinfo {year} {2020})}\BibitemShut {NoStop}%
\bibitem [{\citenamefont {Steinberg}\ and\ \citenamefont
  {Sompolinsky}(2022)}]{steinberg2022associative}%
  \BibitemOpen
  \bibfield  {author} {\bibinfo {author} {\bibfnamefont {J.}~\bibnamefont
  {Steinberg}}\ and\ \bibinfo {author} {\bibfnamefont {H.}~\bibnamefont
  {Sompolinsky}},\ }\bibfield  {title} {\bibinfo {title} {Associative memory of
  structured knowledge},\ }\href@noop {} {\bibfield  {journal} {\bibinfo
  {journal} {Scientific Reports}\ }\textbf {\bibinfo {volume} {12}},\ \bibinfo
  {pages} {21808} (\bibinfo {year} {2022})}\BibitemShut {NoStop}%
\bibitem [{\citenamefont {Cohen}\ and\ \citenamefont
  {Sompolinsky}(2022)}]{cohen_soft-margin_2022}%
  \BibitemOpen
  \bibfield  {author} {\bibinfo {author} {\bibfnamefont {U.}~\bibnamefont
  {Cohen}}\ and\ \bibinfo {author} {\bibfnamefont {H.}~\bibnamefont
  {Sompolinsky}},\ }\bibfield  {title} {\bibinfo {title} {Soft-margin
  classification of object manifolds},\ }\href
  {https://doi.org/10.1103/PhysRevE.106.024126} {\bibfield  {journal} {\bibinfo
   {journal} {Physical Review E}\ }\textbf {\bibinfo {volume} {106}},\ \bibinfo
  {pages} {024126} (\bibinfo {year} {2022})},\ \bibinfo {note}
  {arXiv:2203.07040 [cond-mat, q-bio, stat]}\BibitemShut {NoStop}%
\bibitem [{\citenamefont {Chaudhuri}\ \emph {et~al.}(2019)\citenamefont
  {Chaudhuri}, \citenamefont {Gerçek}, \citenamefont {Pandey}, \citenamefont
  {Peyrache},\ and\ \citenamefont {Fiete}}]{chaudhuri_intrinsic_2019}%
  \BibitemOpen
  \bibfield  {author} {\bibinfo {author} {\bibfnamefont {R.}~\bibnamefont
  {Chaudhuri}}, \bibinfo {author} {\bibfnamefont {B.}~\bibnamefont {Gerçek}},
  \bibinfo {author} {\bibfnamefont {B.}~\bibnamefont {Pandey}}, \bibinfo
  {author} {\bibfnamefont {A.}~\bibnamefont {Peyrache}},\ and\ \bibinfo
  {author} {\bibfnamefont {I.}~\bibnamefont {Fiete}},\ }\bibfield  {title}
  {\bibinfo {title} {The intrinsic attractor manifold and population dynamics
  of a canonical cognitive circuit across waking and sleep},\ }\href
  {https://doi.org/10.1038/s41593-019-0460-x} {\bibfield  {journal} {\bibinfo
  {journal} {Nature Neuroscience}\ }\textbf {\bibinfo {volume} {22}},\ \bibinfo
  {pages} {1512} (\bibinfo {year} {2019})}\BibitemShut {NoStop}%
\bibitem [{\citenamefont {Bernardi}\ \emph {et~al.}(2020)\citenamefont
  {Bernardi}, \citenamefont {Benna}, \citenamefont {Rigotti}, \citenamefont
  {Munuera}, \citenamefont {Fusi},\ and\ \citenamefont
  {Salzman}}]{bernardi_geometry_2020}%
  \BibitemOpen
  \bibfield  {author} {\bibinfo {author} {\bibfnamefont {S.}~\bibnamefont
  {Bernardi}}, \bibinfo {author} {\bibfnamefont {M.~K.}\ \bibnamefont {Benna}},
  \bibinfo {author} {\bibfnamefont {M.}~\bibnamefont {Rigotti}}, \bibinfo
  {author} {\bibfnamefont {J.}~\bibnamefont {Munuera}}, \bibinfo {author}
  {\bibfnamefont {S.}~\bibnamefont {Fusi}},\ and\ \bibinfo {author}
  {\bibfnamefont {C.~D.}\ \bibnamefont {Salzman}},\ }\bibfield  {title}
  {\bibinfo {title} {The {Geometry} of {Abstraction} in the {Hippocampus} and
  {Prefrontal} {Cortex}},\ }\href {https://doi.org/10.1016/j.cell.2020.09.031}
  {\bibfield  {journal} {\bibinfo  {journal} {Cell}\ }\textbf {\bibinfo
  {volume} {183}},\ \bibinfo {pages} {954} (\bibinfo {year} {2020})},\ \bibinfo
  {note} {publisher: Elsevier}\BibitemShut {NoStop}%
\bibitem [{\citenamefont {Chung}\ and\ \citenamefont
  {Abbott}(2021)}]{chung_neural_2021}%
  \BibitemOpen
  \bibfield  {author} {\bibinfo {author} {\bibfnamefont {S.}~\bibnamefont
  {Chung}}\ and\ \bibinfo {author} {\bibfnamefont {L.}~\bibnamefont {Abbott}},\
  }\bibfield  {title} {\bibinfo {title} {Neural population geometry: An
  approach for understanding biological and artificial neural networks},\
  }\href@noop {} {\bibfield  {journal} {\bibinfo  {journal} {Current opinion in
  neurobiology}\ }\textbf {\bibinfo {volume} {70}},\ \bibinfo {pages} {137}
  (\bibinfo {year} {2021})}\BibitemShut {NoStop}%
\bibitem [{\citenamefont {Sorscher}\ \emph {et~al.}(2022)\citenamefont
  {Sorscher}, \citenamefont {Ganguli},\ and\ \citenamefont
  {Sompolinsky}}]{sorscher_neural_2022}%
  \BibitemOpen
  \bibfield  {author} {\bibinfo {author} {\bibfnamefont {B.}~\bibnamefont
  {Sorscher}}, \bibinfo {author} {\bibfnamefont {S.}~\bibnamefont {Ganguli}},\
  and\ \bibinfo {author} {\bibfnamefont {H.}~\bibnamefont {Sompolinsky}},\
  }\bibfield  {title} {\bibinfo {title} {Neural representational geometry
  underlies few-shot concept learning},\ }\href
  {https://doi.org/10.1073/pnas.2200800119} {\bibfield  {journal} {\bibinfo
  {journal} {Proceedings of the National Academy of Sciences}\ }\textbf
  {\bibinfo {volume} {119}},\ \bibinfo {pages} {e2200800119} (\bibinfo {year}
  {2022})}\BibitemShut {NoStop}%
\bibitem [{\citenamefont {Chung}\ \emph
  {et~al.}(2018{\natexlab{a}})\citenamefont {Chung}, \citenamefont {Lee},\ and\
  \citenamefont {Sompolinsky}}]{chung_classification_2018}%
  \BibitemOpen
  \bibfield  {author} {\bibinfo {author} {\bibfnamefont {S.}~\bibnamefont
  {Chung}}, \bibinfo {author} {\bibfnamefont {D.~D.}\ \bibnamefont {Lee}},\
  and\ \bibinfo {author} {\bibfnamefont {H.}~\bibnamefont {Sompolinsky}},\
  }\bibfield  {title} {\bibinfo {title} {Classification and {Geometry} of
  {General} {Perceptual} {Manifolds}},\ }\href
  {https://doi.org/10.1103/PhysRevX.8.031003} {\bibfield  {journal} {\bibinfo
  {journal} {Physical Review X}\ }\textbf {\bibinfo {volume} {8}},\ \bibinfo
  {pages} {031003} (\bibinfo {year} {2018}{\natexlab{a}})}\BibitemShut
  {NoStop}%
\bibitem [{\citenamefont {Cohen}\ \emph {et~al.}(2020)\citenamefont {Cohen},
  \citenamefont {Chung}, \citenamefont {Lee},\ and\ \citenamefont
  {Sompolinsky}}]{cohen_separability_2020}%
  \BibitemOpen
  \bibfield  {author} {\bibinfo {author} {\bibfnamefont {U.}~\bibnamefont
  {Cohen}}, \bibinfo {author} {\bibfnamefont {S.}~\bibnamefont {Chung}},
  \bibinfo {author} {\bibfnamefont {D.~D.}\ \bibnamefont {Lee}},\ and\ \bibinfo
  {author} {\bibfnamefont {H.}~\bibnamefont {Sompolinsky}},\ }\bibfield
  {title} {\bibinfo {title} {Separability and geometry of object manifolds in
  deep neural networks},\ }\href {https://doi.org/10.1038/s41467-020-14578-5}
  {\bibfield  {journal} {\bibinfo  {journal} {Nature Communications}\ }\textbf
  {\bibinfo {volume} {11}},\ \bibinfo {pages} {746} (\bibinfo {year} {2020})},\
  \bibinfo {note} {number: 1 Publisher: Nature Publishing Group}\BibitemShut
  {NoStop}%
\bibitem [{\citenamefont {Chung}\ \emph {et~al.}(2016)\citenamefont {Chung},
  \citenamefont {Lee},\ and\ \citenamefont {Sompolinsky}}]{chung_linear_2016}%
  \BibitemOpen
  \bibfield  {author} {\bibinfo {author} {\bibfnamefont {S.}~\bibnamefont
  {Chung}}, \bibinfo {author} {\bibfnamefont {D.~D.}\ \bibnamefont {Lee}},\
  and\ \bibinfo {author} {\bibfnamefont {H.}~\bibnamefont {Sompolinsky}},\
  }\bibfield  {title} {\bibinfo {title} {Linear {Readout} of {Object}
  {Manifolds}},\ }\href {https://doi.org/10.1103/PhysRevE.93.060301} {\bibfield
   {journal} {\bibinfo  {journal} {Physical Review E}\ }\textbf {\bibinfo
  {volume} {93}},\ \bibinfo {pages} {060301} (\bibinfo {year} {2016})},\
  \bibinfo {note} {arXiv:1512.01834 [cond-mat, q-bio, stat]}\BibitemShut
  {NoStop}%
\bibitem [{\citenamefont {Harrison}(2011)}]{birdimg}%
  \BibitemOpen
  \bibfield  {author} {\bibinfo {author} {\bibfnamefont {J.~J.}\ \bibnamefont
  {Harrison}},\ }\href@noop {} {\bibinfo {title} {{Azure} {Kingfisher}}}
  (\bibinfo {year} {2011}),\ \bibinfo {note}
  {\url{https://upload.wikimedia.org/wikipedia/commons/7/72/Alcedo_azurea_-_Julatten.jpg}
  This work is licensed under the Creative Commons 3.0 Unported License. To
  view a copy of this license, visit
  \url{https://creativecommons.org/licenses/by/3.0/legalcode}}\BibitemShut
  {NoStop}%
\bibitem [{\citenamefont {Korneev}(2021)}]{vaseimg}%
  \BibitemOpen
  \bibfield  {author} {\bibinfo {author} {\bibfnamefont {S.}~\bibnamefont
  {Korneev}},\ }\href@noop {} {\bibinfo {title} {{Faberge} {Vase}}} (\bibinfo
  {year} {2021}),\ \bibinfo {note}
  {\url{https://upload.wikimedia.org/wikipedia/commons/9/9e/Faberge_vase_State_Museum_of_Sport_1928.jpg}
  This work is licensed under the Creative Commons 4.0 ShareAlike License
  International. To view a copy of this license, visit
  \url{https://creativecommons.org/licenses/by-sa/4.0/legalcode}}\BibitemShut
  {NoStop}%
\bibitem [{\citenamefont {Karwath}(2005)}]{chinesevaseimg}%
  \BibitemOpen
  \bibfield  {author} {\bibinfo {author} {\bibfnamefont {A.}~\bibnamefont
  {Karwath}},\ }\href@noop {} {\bibinfo {title} {{Hand-made} {Chinese} {Vase}}}
  (\bibinfo {year} {2005}),\ \bibinfo {note}
  {\url{https://upload.wikimedia.org/wikipedia/commons/b/b8/Chinese_vase.jpg}
  This work is licensed under the Creative Commons 2.5 Generic ShareAlike
  License. To view a copy of this license, visit
  \url{https://creativecommons.org/licenses/by-sa/2.5/legalcode}}\BibitemShut
  {NoStop}%
\bibitem [{\citenamefont {Fioreze}(2008)}]{skyimg}%
  \BibitemOpen
  \bibfield  {author} {\bibinfo {author} {\bibfnamefont {T.}~\bibnamefont
  {Fioreze}},\ }\href@noop {} {\bibinfo {title} {{Clouds} over the {Atlantic}
  {Ocean}}} (\bibinfo {year} {2008}),\ \bibinfo {note}
  {\url{https://upload.wikimedia.org/wikipedia/commons/e/e0/Clouds_over_the_Atlantic_Ocean.jpg}
  This work is licensed under the Creative Commons ShareAlike 3.0 Unported
  License. To view a copy of this license, visit
  \url{https://creativecommons.org/licenses/by-sa/3.0/legalcode}}\BibitemShut
  {NoStop}%
\bibitem [{\citenamefont {Panzeri}\ \emph {et~al.}(2022)\citenamefont
  {Panzeri}, \citenamefont {Moroni}, \citenamefont {Safaai},\ and\
  \citenamefont {Harvey}}]{panzeri_structures_2022}%
  \BibitemOpen
  \bibfield  {author} {\bibinfo {author} {\bibfnamefont {S.}~\bibnamefont
  {Panzeri}}, \bibinfo {author} {\bibfnamefont {M.}~\bibnamefont {Moroni}},
  \bibinfo {author} {\bibfnamefont {H.}~\bibnamefont {Safaai}},\ and\ \bibinfo
  {author} {\bibfnamefont {C.~D.}\ \bibnamefont {Harvey}},\ }\bibfield  {title}
  {\bibinfo {title} {The structures and functions of correlations in neural
  population codes},\ }\href {https://doi.org/10.1038/s41583-022-00606-4}
  {\bibfield  {journal} {\bibinfo  {journal} {Nature Reviews Neuroscience}\
  }\textbf {\bibinfo {volume} {23}},\ \bibinfo {pages} {551} (\bibinfo {year}
  {2022})}\BibitemShut {NoStop}%
\bibitem [{\citenamefont {Zylberberg}\ \emph {et~al.}(2017)\citenamefont
  {Zylberberg}, \citenamefont {Pouget}, \citenamefont {Latham},\ and\
  \citenamefont {Shea-Brown}}]{zylberberg_robust_2017}%
  \BibitemOpen
  \bibfield  {author} {\bibinfo {author} {\bibfnamefont {J.}~\bibnamefont
  {Zylberberg}}, \bibinfo {author} {\bibfnamefont {A.}~\bibnamefont {Pouget}},
  \bibinfo {author} {\bibfnamefont {P.~E.}\ \bibnamefont {Latham}},\ and\
  \bibinfo {author} {\bibfnamefont {E.}~\bibnamefont {Shea-Brown}},\ }\bibfield
   {title} {\bibinfo {title} {Robust information propagation through noisy
  neural circuits},\ }\href {https://doi.org/10.1371/journal.pcbi.1005497}
  {\bibfield  {journal} {\bibinfo  {journal} {PLOS Computational Biology}\
  }\textbf {\bibinfo {volume} {13}},\ \bibinfo {pages} {e1005497} (\bibinfo
  {year} {2017})}\BibitemShut {NoStop}%
\bibitem [{\citenamefont {Morcos}\ \emph {et~al.}(2018)\citenamefont {Morcos},
  \citenamefont {Raghu},\ and\ \citenamefont {Bengio}}]{morcos2018insights}%
  \BibitemOpen
  \bibfield  {author} {\bibinfo {author} {\bibfnamefont {A.}~\bibnamefont
  {Morcos}}, \bibinfo {author} {\bibfnamefont {M.}~\bibnamefont {Raghu}},\ and\
  \bibinfo {author} {\bibfnamefont {S.}~\bibnamefont {Bengio}},\ }\bibfield
  {title} {\bibinfo {title} {Insights on representational similarity in neural
  networks with canonical correlation},\ }\href@noop {} {\bibfield  {journal}
  {\bibinfo  {journal} {Advances in Neural Information Processing Systems}\
  }\textbf {\bibinfo {volume} {31}} (\bibinfo {year} {2018})}\BibitemShut
  {NoStop}%
\bibitem [{\citenamefont {Kornblith}\ \emph {et~al.}(2019)\citenamefont
  {Kornblith}, \citenamefont {Norouzi}, \citenamefont {Lee},\ and\
  \citenamefont {Hinton}}]{kornblith2019similarity}%
  \BibitemOpen
  \bibfield  {author} {\bibinfo {author} {\bibfnamefont {S.}~\bibnamefont
  {Kornblith}}, \bibinfo {author} {\bibfnamefont {M.}~\bibnamefont {Norouzi}},
  \bibinfo {author} {\bibfnamefont {H.}~\bibnamefont {Lee}},\ and\ \bibinfo
  {author} {\bibfnamefont {G.}~\bibnamefont {Hinton}},\ }\bibfield  {title}
  {\bibinfo {title} {Similarity of neural network representations revisited},\
  }in\ \href@noop {} {\emph {\bibinfo {booktitle} {International conference on
  machine learning}}}\ (\bibinfo {year} {2019})\ pp.\ \bibinfo {pages}
  {3519--3529},\ \bibinfo {note} {tex.organization: PMLR}\BibitemShut {NoStop}%
\bibitem [{\citenamefont {Chen}\ \emph {et~al.}(2020)\citenamefont {Chen},
  \citenamefont {Kornblith}, \citenamefont {Norouzi},\ and\ \citenamefont
  {Hinton}}]{chen2020simple}%
  \BibitemOpen
  \bibfield  {author} {\bibinfo {author} {\bibfnamefont {T.}~\bibnamefont
  {Chen}}, \bibinfo {author} {\bibfnamefont {S.}~\bibnamefont {Kornblith}},
  \bibinfo {author} {\bibfnamefont {M.}~\bibnamefont {Norouzi}},\ and\ \bibinfo
  {author} {\bibfnamefont {G.}~\bibnamefont {Hinton}},\ }\bibfield  {title}
  {\bibinfo {title} {A simple framework for contrastive learning of visual
  representations},\ }in\ \href@noop {} {\emph {\bibinfo {booktitle}
  {International conference on machine learning}}}\ (\bibinfo {year} {2020})\
  pp.\ \bibinfo {pages} {1597--1607},\ \bibinfo {note} {tex.organization:
  PMLR}\BibitemShut {NoStop}%
\bibitem [{\citenamefont {Bardes}\ \emph {et~al.}(2022)\citenamefont {Bardes},
  \citenamefont {Ponce},\ and\ \citenamefont {LeCun}}]{bardes2022vicreg}%
  \BibitemOpen
  \bibfield  {author} {\bibinfo {author} {\bibfnamefont {A.}~\bibnamefont
  {Bardes}}, \bibinfo {author} {\bibfnamefont {J.}~\bibnamefont {Ponce}},\ and\
  \bibinfo {author} {\bibfnamefont {Y.}~\bibnamefont {LeCun}},\ }\bibfield
  {title} {\bibinfo {title} {{VICReg}: {Variance}-invariance-covariance
  regularization for self-supervised learning},\ }in\ \href
  {https://openreview.net/forum?id=xm6YD62D1Ub} {\emph {\bibinfo {booktitle}
  {International conference on learning representations}}}\ (\bibinfo {year}
  {2022})\BibitemShut {NoStop}%
\bibitem [{\citenamefont {Zbontar}\ \emph {et~al.}(2021)\citenamefont
  {Zbontar}, \citenamefont {Jing}, \citenamefont {Misra}, \citenamefont
  {LeCun},\ and\ \citenamefont {Deny}}]{zbontar2021barlow}%
  \BibitemOpen
  \bibfield  {author} {\bibinfo {author} {\bibfnamefont {J.}~\bibnamefont
  {Zbontar}}, \bibinfo {author} {\bibfnamefont {L.}~\bibnamefont {Jing}},
  \bibinfo {author} {\bibfnamefont {I.}~\bibnamefont {Misra}}, \bibinfo
  {author} {\bibfnamefont {Y.}~\bibnamefont {LeCun}},\ and\ \bibinfo {author}
  {\bibfnamefont {S.}~\bibnamefont {Deny}},\ }\bibfield  {title} {\bibinfo
  {title} {Barlow twins: {Self}-supervised learning via redundancy reduction},\
  }in\ \href@noop {} {\emph {\bibinfo {booktitle} {International conference on
  machine learning}}}\ (\bibinfo {year} {2021})\ pp.\ \bibinfo {pages}
  {12310--12320},\ \bibinfo {note} {tex.organization: PMLR}\BibitemShut
  {NoStop}%
\bibitem [{\citenamefont {He}\ \emph {et~al.}(2020)\citenamefont {He},
  \citenamefont {Fan}, \citenamefont {Wu}, \citenamefont {Xie},\ and\
  \citenamefont {Girshick}}]{he_momentum_2020}%
  \BibitemOpen
  \bibfield  {author} {\bibinfo {author} {\bibfnamefont {K.}~\bibnamefont
  {He}}, \bibinfo {author} {\bibfnamefont {H.}~\bibnamefont {Fan}}, \bibinfo
  {author} {\bibfnamefont {Y.}~\bibnamefont {Wu}}, \bibinfo {author}
  {\bibfnamefont {S.}~\bibnamefont {Xie}},\ and\ \bibinfo {author}
  {\bibfnamefont {R.}~\bibnamefont {Girshick}},\ }\bibfield  {title} {\bibinfo
  {title} {Momentum {Contrast} for {Unsupervised} {Visual} {Representation}
  {Learning}},\ }in\ \href {https://doi.org/10.1109/CVPR42600.2020.00975}
  {\emph {\bibinfo {booktitle} {2020 {IEEE}/{CVF} {Conference} on {Computer}
  {Vision} and {Pattern} {Recognition} ({CVPR})}}}\ (\bibinfo  {publisher}
  {IEEE},\ \bibinfo {address} {Seattle, WA, USA},\ \bibinfo {year} {2020})\
  pp.\ \bibinfo {pages} {9726--9735}\BibitemShut {NoStop}%
\bibitem [{\citenamefont {Gardner}(1988)}]{gardner_space_1988}%
  \BibitemOpen
  \bibfield  {author} {\bibinfo {author} {\bibfnamefont {E.}~\bibnamefont
  {Gardner}},\ }\bibfield  {title} {\bibinfo {title} {The space of interactions
  in neural network models},\ }\href
  {https://doi.org/10.1088/0305-4470/21/1/030} {\bibfield  {journal} {\bibinfo
  {journal} {Journal of Physics A: Mathematical and General}\ }\textbf
  {\bibinfo {volume} {21}},\ \bibinfo {pages} {257} (\bibinfo {year}
  {1988})}\BibitemShut {NoStop}%
\bibitem [{\citenamefont {Rubin}\ \emph {et~al.}(2010)\citenamefont {Rubin},
  \citenamefont {Monasson},\ and\ \citenamefont
  {Sompolinsky}}]{rubin2010theory}%
  \BibitemOpen
  \bibfield  {author} {\bibinfo {author} {\bibfnamefont {R.}~\bibnamefont
  {Rubin}}, \bibinfo {author} {\bibfnamefont {R.}~\bibnamefont {Monasson}},\
  and\ \bibinfo {author} {\bibfnamefont {H.}~\bibnamefont {Sompolinsky}},\
  }\bibfield  {title} {\bibinfo {title} {Theory of spike timing-based neural
  classifiers},\ }\href@noop {} {\bibfield  {journal} {\bibinfo  {journal}
  {Physical review letters}\ }\textbf {\bibinfo {volume} {105}},\ \bibinfo
  {pages} {218102} (\bibinfo {year} {2010})}\BibitemShut {NoStop}%
\bibitem [{\citenamefont {Schönsberg}\ \emph {et~al.}(2021)\citenamefont
  {Schönsberg}, \citenamefont {Roudi},\ and\ \citenamefont
  {Treves}}]{schonsberg_efficiency_2021}%
  \BibitemOpen
  \bibfield  {author} {\bibinfo {author} {\bibfnamefont {F.}~\bibnamefont
  {Schönsberg}}, \bibinfo {author} {\bibfnamefont {Y.}~\bibnamefont {Roudi}},\
  and\ \bibinfo {author} {\bibfnamefont {A.}~\bibnamefont {Treves}},\
  }\bibfield  {title} {\bibinfo {title} {Efficiency of {Local} {Learning}
  {Rules} in {Threshold}-{Linear} {Associative} {Networks}},\ }\href
  {https://doi.org/10.1103/PhysRevLett.126.018301} {\bibfield  {journal}
  {\bibinfo  {journal} {Physical Review Letters}\ }\textbf {\bibinfo {volume}
  {126}},\ \bibinfo {pages} {018301} (\bibinfo {year} {2021})}\BibitemShut
  {NoStop}%
\bibitem [{\citenamefont {Monasson}(1992)}]{monasson_properties_1992}%
  \BibitemOpen
  \bibfield  {author} {\bibinfo {author} {\bibfnamefont {R.}~\bibnamefont
  {Monasson}},\ }\bibfield  {title} {\bibinfo {title} {Properties of neural
  networks storing spatially correlated patterns},\ }\href
  {https://doi.org/10.1088/0305-4470/25/13/019} {\bibfield  {journal} {\bibinfo
   {journal} {Journal of Physics A: Mathematical and General}\ }\textbf
  {\bibinfo {volume} {25}},\ \bibinfo {pages} {3701} (\bibinfo {year}
  {1992})}\BibitemShut {NoStop}%
\bibitem [{\citenamefont {Lopez}\ \emph {et~al.}(1995)\citenamefont {Lopez},
  \citenamefont {Schroder},\ and\ \citenamefont {Opper}}]{Lopez_1995}%
  \BibitemOpen
  \bibfield  {author} {\bibinfo {author} {\bibfnamefont {B.}~\bibnamefont
  {Lopez}}, \bibinfo {author} {\bibfnamefont {M.}~\bibnamefont {Schroder}},\
  and\ \bibinfo {author} {\bibfnamefont {M.}~\bibnamefont {Opper}},\ }\bibfield
   {title} {\bibinfo {title} {Storage of correlated patterns in a perceptron},\
  }\href {https://doi.org/10.1088/0305-4470/28/16/005} {\bibfield  {journal}
  {\bibinfo  {journal} {Journal of Physics A: Mathematical and General}\
  }\textbf {\bibinfo {volume} {28}},\ \bibinfo {pages} {L447} (\bibinfo {year}
  {1995})}\BibitemShut {NoStop}%
\bibitem [{\citenamefont {Mezard}\ \emph {et~al.}(1986)\citenamefont {Mezard},
  \citenamefont {Parisi},\ and\ \citenamefont {Virasoro}}]{mezard_spin_1986}%
  \BibitemOpen
  \bibfield  {author} {\bibinfo {author} {\bibfnamefont {M.}~\bibnamefont
  {Mezard}}, \bibinfo {author} {\bibfnamefont {G.}~\bibnamefont {Parisi}},\
  and\ \bibinfo {author} {\bibfnamefont {M.}~\bibnamefont {Virasoro}},\ }\href
  {https://doi.org/10.1142/0271} {\emph {\bibinfo {title} {Spin {Glass}
  {Theory} and {Beyond}}}}\ (\bibinfo  {publisher} {WORLD SCIENTIFIC},\
  \bibinfo {year} {1986})\ \bibinfo {note} {\_eprint:
  https://www.worldscientific.com/doi/pdf/10.1142/0271}\BibitemShut {NoStop}%
\bibitem [{\citenamefont {Mezard}\ and\ \citenamefont
  {Montanari}(2009)}]{mezard_information_2009}%
  \BibitemOpen
  \bibfield  {author} {\bibinfo {author} {\bibfnamefont {M.}~\bibnamefont
  {Mezard}}\ and\ \bibinfo {author} {\bibfnamefont {A.}~\bibnamefont
  {Montanari}},\ }\href@noop {} {\emph {\bibinfo {title} {Information, physics,
  and computation}}}\ (\bibinfo  {publisher} {Oxford University Press},\
  \bibinfo {year} {2009})\BibitemShut {NoStop}%
\bibitem [{not()}]{note:SM}%
  \BibitemOpen
  \href@noop {} {\bibinfo {title} {See supplemental material below for details
  of the replica calculations and experimental details, which includes {Ref.}
  [38].}}\BibitemShut {Stop}%
\bibitem [{\citenamefont {Hager}(1989)}]{hager1989updating}%
  \BibitemOpen
  \bibfield  {author} {\bibinfo {author} {\bibfnamefont {W.~W.}\ \bibnamefont
  {Hager}},\ }\bibfield  {title} {\bibinfo {title} {Updating the inverse of a
  matrix},\ }\href@noop {} {\bibfield  {journal} {\bibinfo  {journal} {SIAM
  review}\ }\textbf {\bibinfo {volume} {31}},\ \bibinfo {pages} {221} (\bibinfo
  {year} {1989})}\BibitemShut {NoStop}%
\bibitem [{\citenamefont {Wakhloo}\ \emph {et~al.}(2023)\citenamefont
  {Wakhloo}, \citenamefont {Sussman},\ and\ \citenamefont
  {Chung}}]{note:repository}%
  \BibitemOpen
  \bibfield  {author} {\bibinfo {author} {\bibfnamefont {A.}~\bibnamefont
  {Wakhloo}}, \bibinfo {author} {\bibfnamefont {T.}~\bibnamefont {Sussman}},\
  and\ \bibinfo {author} {\bibfnamefont {S.}~\bibnamefont {Chung}},\
  }\href@noop {} {\bibinfo {title} {Capacity for correlated manifolds code}},\
  \bibinfo {howpublished}
{\url{https://zenodo.org/record/7844169#.ZD9Gwy-B22s}} (\bibinfo {year}
  {2023}),\ \bibinfo {note} {10.5281/zenodo.7844169}\BibitemShut {NoStop}%
\bibitem [{\citenamefont {Chung}\ \emph
  {et~al.}(2018{\natexlab{b}})\citenamefont {Chung}, \citenamefont {Cohen},
  \citenamefont {Sompolinsky},\ and\ \citenamefont
  {Lee}}]{chung_learning_2018}%
  \BibitemOpen
  \bibfield  {author} {\bibinfo {author} {\bibfnamefont {S.}~\bibnamefont
  {Chung}}, \bibinfo {author} {\bibfnamefont {U.}~\bibnamefont {Cohen}},
  \bibinfo {author} {\bibfnamefont {H.}~\bibnamefont {Sompolinsky}},\ and\
  \bibinfo {author} {\bibfnamefont {D.~D.}\ \bibnamefont {Lee}},\ }\bibfield
  {title} {\bibinfo {title} {Learning data manifolds with a cutting plane
  method},\ }\href {https://doi.org/10.1162/neco_a_01119} {\bibfield  {journal}
  {\bibinfo  {journal} {Neural Computation}\ }\textbf {\bibinfo {volume}
  {30}},\ \bibinfo {pages} {2593} (\bibinfo {year}
  {2018}{\natexlab{b}})}\BibitemShut {NoStop}%
\bibitem [{\citenamefont {Boyd}\ \emph {et~al.}(2004)\citenamefont {Boyd},
  \citenamefont {Boyd},\ and\ \citenamefont {Vandenberghe}}]{boyd2004convex}%
  \BibitemOpen
  \bibfield  {author} {\bibinfo {author} {\bibfnamefont {S.}~\bibnamefont
  {Boyd}}, \bibinfo {author} {\bibfnamefont {S.~P.}\ \bibnamefont {Boyd}},\
  and\ \bibinfo {author} {\bibfnamefont {L.}~\bibnamefont {Vandenberghe}},\
  }\href@noop {} {\emph {\bibinfo {title} {Convex optimization}}}\ (\bibinfo
  {publisher} {Cambridge university press},\ \bibinfo {year}
  {2004})\BibitemShut {NoStop}%
\bibitem [{\citenamefont {Gardner}\ and\ \citenamefont
  {Derrida}(1988)}]{gardner1988optimal}%
  \BibitemOpen
  \bibfield  {author} {\bibinfo {author} {\bibfnamefont {E.}~\bibnamefont
  {Gardner}}\ and\ \bibinfo {author} {\bibfnamefont {B.}~\bibnamefont
  {Derrida}},\ }\bibfield  {title} {\bibinfo {title} {Optimal storage
  properties of neural network models},\ }\href@noop {} {\bibfield  {journal}
  {\bibinfo  {journal} {Journal of Physics A: Mathematical and general}\
  }\textbf {\bibinfo {volume} {21}},\ \bibinfo {pages} {271} (\bibinfo {year}
  {1988})}\BibitemShut {NoStop}%
\bibitem [{\citenamefont {He}\ \emph {et~al.}(2016)\citenamefont {He},
  \citenamefont {Zhang}, \citenamefont {Ren},\ and\ \citenamefont
  {Sun}}]{he2016deep}%
  \BibitemOpen
  \bibfield  {author} {\bibinfo {author} {\bibfnamefont {K.}~\bibnamefont
  {He}}, \bibinfo {author} {\bibfnamefont {X.}~\bibnamefont {Zhang}}, \bibinfo
  {author} {\bibfnamefont {S.}~\bibnamefont {Ren}},\ and\ \bibinfo {author}
  {\bibfnamefont {J.}~\bibnamefont {Sun}},\ }\bibfield  {title} {\bibinfo
  {title} {Deep residual learning for image recognition},\ }in\ \href@noop {}
  {\emph {\bibinfo {booktitle} {Proceedings of the {IEEE} conference on
  computer vision and pattern recognition}}}\ (\bibinfo {year} {2016})\ pp.\
  \bibinfo {pages} {770--778}\BibitemShut {NoStop}%
\bibitem [{\citenamefont {Russakovsky}\ \emph {et~al.}(2015)\citenamefont
  {Russakovsky}, \citenamefont {Deng}, \citenamefont {Su}, \citenamefont
  {Krause}, \citenamefont {Satheesh}, \citenamefont {Ma}, \citenamefont
  {Huang}, \citenamefont {Karpathy}, \citenamefont {Khosla}, \citenamefont
  {Bernstein}, \citenamefont {Berg},\ and\ \citenamefont {Fei-Fei}}]{ILSVRC15}%
  \BibitemOpen
  \bibfield  {author} {\bibinfo {author} {\bibfnamefont {O.}~\bibnamefont
  {Russakovsky}}, \bibinfo {author} {\bibfnamefont {J.}~\bibnamefont {Deng}},
  \bibinfo {author} {\bibfnamefont {H.}~\bibnamefont {Su}}, \bibinfo {author}
  {\bibfnamefont {J.}~\bibnamefont {Krause}}, \bibinfo {author} {\bibfnamefont
  {S.}~\bibnamefont {Satheesh}}, \bibinfo {author} {\bibfnamefont
  {S.}~\bibnamefont {Ma}}, \bibinfo {author} {\bibfnamefont {Z.}~\bibnamefont
  {Huang}}, \bibinfo {author} {\bibfnamefont {A.}~\bibnamefont {Karpathy}},
  \bibinfo {author} {\bibfnamefont {A.}~\bibnamefont {Khosla}}, \bibinfo
  {author} {\bibfnamefont {M.}~\bibnamefont {Bernstein}}, \bibinfo {author}
  {\bibfnamefont {A.~C.}\ \bibnamefont {Berg}},\ and\ \bibinfo {author}
  {\bibfnamefont {L.}~\bibnamefont {Fei-Fei}},\ }\bibfield  {title} {\bibinfo
  {title} {{ImageNet} large scale visual recognition challenge},\ }\href
  {https://doi.org/10.1007/s11263-015-0816-y} {\bibfield  {journal} {\bibinfo
  {journal} {International Journal of Computer Vision (IJCV)}\ }\textbf
  {\bibinfo {volume} {115}},\ \bibinfo {pages} {211} (\bibinfo {year}
  {2015})}\BibitemShut {NoStop}%
\bibitem [{\citenamefont {Saxe}\ \emph {et~al.}(2019)\citenamefont {Saxe},
  \citenamefont {McClelland},\ and\ \citenamefont
  {Ganguli}}]{saxe_mathematical_2019}%
  \BibitemOpen
  \bibfield  {author} {\bibinfo {author} {\bibfnamefont {A.~M.}\ \bibnamefont
  {Saxe}}, \bibinfo {author} {\bibfnamefont {J.~L.}\ \bibnamefont
  {McClelland}},\ and\ \bibinfo {author} {\bibfnamefont {S.}~\bibnamefont
  {Ganguli}},\ }\bibfield  {title} {\bibinfo {title} {A mathematical theory of
  semantic development in deep neural networks},\ }\href
  {https://doi.org/10.1073/pnas.1820226116} {\bibfield  {journal} {\bibinfo
  {journal} {Proceedings of the National Academy of Sciences}\ }\textbf
  {\bibinfo {volume} {116}},\ \bibinfo {pages} {11537} (\bibinfo {year}
  {2019})}\BibitemShut {NoStop}%
\bibitem [{\citenamefont {Johnston}\ and\ \citenamefont
  {Fusi}(2023)}]{johnston2023abstract}%
  \BibitemOpen
  \bibfield  {author} {\bibinfo {author} {\bibfnamefont {W.~J.}\ \bibnamefont
  {Johnston}}\ and\ \bibinfo {author} {\bibfnamefont {S.}~\bibnamefont
  {Fusi}},\ }\bibfield  {title} {\bibinfo {title} {Abstract representations
  emerge naturally in neural networks trained to perform multiple tasks},\
  }\href@noop {} {\bibfield  {journal} {\bibinfo  {journal} {Nature
  Communications}\ }\textbf {\bibinfo {volume} {14}},\ \bibinfo {pages} {1040}
  (\bibinfo {year} {2023})}\BibitemShut {NoStop}%
\bibitem [{\citenamefont {Higgins}\ \emph {et~al.}(2017)\citenamefont
  {Higgins}, \citenamefont {Matthey}, \citenamefont {Pal}, \citenamefont
  {Burgess}, \citenamefont {Glorot}, \citenamefont {Botvinick}, \citenamefont
  {Mohamed},\ and\ \citenamefont {Lerchner}}]{higgins2017betavae}%
  \BibitemOpen
  \bibfield  {author} {\bibinfo {author} {\bibfnamefont {I.}~\bibnamefont
  {Higgins}}, \bibinfo {author} {\bibfnamefont {L.}~\bibnamefont {Matthey}},
  \bibinfo {author} {\bibfnamefont {A.}~\bibnamefont {Pal}}, \bibinfo {author}
  {\bibfnamefont {C.}~\bibnamefont {Burgess}}, \bibinfo {author} {\bibfnamefont
  {X.}~\bibnamefont {Glorot}}, \bibinfo {author} {\bibfnamefont
  {M.}~\bibnamefont {Botvinick}}, \bibinfo {author} {\bibfnamefont
  {S.}~\bibnamefont {Mohamed}},\ and\ \bibinfo {author} {\bibfnamefont
  {A.}~\bibnamefont {Lerchner}},\ }\bibfield  {title} {\bibinfo {title}
  {beta-{VAE}: {Learning} basic visual concepts with a constrained variational
  framework},\ }in\ \href {https://openreview.net/forum?id=Sy2fzU9gl} {\emph
  {\bibinfo {booktitle} {International conference on learning
  representations}}}\ (\bibinfo {year} {2017})\BibitemShut {NoStop}%
\bibitem [{\citenamefont {Higgins}\ \emph {et~al.}(2018)\citenamefont
  {Higgins}, \citenamefont {Amos}, \citenamefont {Pfau}, \citenamefont
  {Racaniere}, \citenamefont {Matthey}, \citenamefont {Rezende},\ and\
  \citenamefont {Lerchner}}]{higgins_towards_2018}%
  \BibitemOpen
  \bibfield  {author} {\bibinfo {author} {\bibfnamefont {I.}~\bibnamefont
  {Higgins}}, \bibinfo {author} {\bibfnamefont {D.}~\bibnamefont {Amos}},
  \bibinfo {author} {\bibfnamefont {D.}~\bibnamefont {Pfau}}, \bibinfo {author}
  {\bibfnamefont {S.}~\bibnamefont {Racaniere}}, \bibinfo {author}
  {\bibfnamefont {L.}~\bibnamefont {Matthey}}, \bibinfo {author} {\bibfnamefont
  {D.}~\bibnamefont {Rezende}},\ and\ \bibinfo {author} {\bibfnamefont
  {A.}~\bibnamefont {Lerchner}},\ }\bibfield  {title} {\bibinfo {title}
  {Towards a {Definition} of {Disentangled} {Representations}},\ }\href
  {http://arxiv.org/abs/1812.02230} {\bibfield  {journal} {\bibinfo  {journal}
  {arXiv:1812.02230 [cs, stat]}\ } (\bibinfo {year} {2018})},\ \bibinfo {note}
  {arXiv: 1812.02230}\BibitemShut {NoStop}%
\end{thebibliography}

%

\end{document}


\maketitle
\section{Capacity for General Manifolds with Arbitrary Correlations}
\begin{claim}
Consider a set of manifolds $M^\mu\subset \mathbb{R}^N$ with $\mu= 1,\dots,P$ and with corresponding shape sets $\mathcal S^\mu \subset \mathbb{R}^K$. Suppose the axes and centroids $u^\mu_i \in \mathbb R^N$ are distributed according to: $p(u) \propto \exp\big[- \frac N 2 \sum_{\mu,\nu,i,j,l} (C^{-1})^{\mu,i}_{\nu,j} u^\mu_{i,l} u^\nu_{j,l}\big]$, and assign random binary labels $y^\mu \in \{ -1,1\}$ with equal probability to each manifold. We define the capacity as the maximum number of manifolds per input dimension, $\alpha \equiv P/N$, which admits a solution $w\in\mathbb{S}(\sqrt{N})$ to the separation problem, $\min_{\mu \in \mathbb{N}_1^P} \min_{x\in M^\mu}  y^\mu \langle w, x \rangle \geq \kappa$  with probability $1$ for $\kappa \geq 0$ in the thermodynamic limit, $N, P \to \infty, \;  P/N = O(1).$ Under our assumptions on the manifolds $M^\mu$, the capacity converges to:

\begin{gather}
\frac 1 {\alpha_{cor}(\kappa)} = \frac 1 P \mathbb E_{y} \int D_{y,C}T \min_{V\in \mathcal A} 
||V - T||^2_{y, C},
\label{eq:general-T-supp}
\end{gather}

\noindent where the average is with respect to the i.i.d. labels taking values $\pm 1$ with equal probability, the constraint set is: 

\begin{gather} 
\mathcal A \equiv \bigg\{V \in \mathbb R^{P \times (K+1)} :\forall \mu \in \mathbb N_1^P, \ V^\mu_0 +  \min_{s \in \mathcal{S}^\mu}
\sum_{i>0}V^\mu_i s_i  \geq \kappa 
\bigg\} \; ,
\label{eq:first-constraint}
\end{gather} 

and the Mahalanobis norm $||X||_{y,C}$ is defined by: $||X||^2_{y,C} \equiv \sum_{i,j,\nu,\mu} X^\mu_i X^\nu_j y^\mu y^\nu \big(C^{-1}\big)^{\mu,i}_{\nu, j}$. As in the main text, we also define the Gaussian measure $D_{y,C} T$ as: 

\begin{gather} 
    D_{y,C} T = (2\pi)^{-P(K+1)/2}|G|^{-1/2} \exp\bigg\{ 
-\frac 1 2 \sum_{\mu, \nu, i, j} T^{\mu}_i T^{\nu}_j y^\mu y^\nu \big( C^{-1} \big)^{\mu,i}_{\nu, j}
\bigg\}
\bigg[\prod_{\mu=1}^{P} \prod_{i=0}^K dT_i^\mu \bigg] \;, 
\end{gather} 

where $|G|$ is the determinant of the tensor $y^\mu y^\nu C^{\mu,i}_{\nu,j}$, unrolled into a matrix of dimensions $P(K+1) \times P(K+1).$

\end{claim}

\begin{proof}[Derivation:]
We calculate the log volume of the space of solutions \cite{gardner_space_1988}: 

\begin{gather}
\mathbb E_{y,u} \log Z \equiv \mathbb E_{y,u} \log \int d^Nw \delta(w^2-N) \prod_\mu 
\Theta\bigg( \min_{x \in M^\mu}y^\mu \langle w, x \rangle 
- \kappa 
\bigg)
\label{eq:logvol}
\end{gather} 

This is done using the replica method, which relies on the identity: $\mathbb E \log Z  = \lim_{n\to 0}n^{-1}(\mathbb{E}Z^n - 1) =  \lim_{n\to 0} n^{-1} \log \mathbb E Z^n$ . We first assume that $n \in \mathbb{N}$ and only later take the limit $n\to 0$ after obtaining an expression which is analytic in $n.$ Replicating the volume integral $n$ times and rewriting the constraint in terms of the fields $H^{\mu,a}_i$ gives:

\begin{gather}
\mathbb E_{y,u} Z^n = \mathbb E_{y,u} \int \prod_{a=1}^n d^Nw_a \delta(w_a^2-N) \prod_\mu 
\int \mathbb{D} H^{\mu,a}  \prod_{i=0}^{K}\sqrt{2\pi}\delta(H^{\mu, a}_i - y^\mu w^T_a u^\mu_i),
\end{gather}

where, as in \cite{chung_classification_2018}, we have absorbed the constraint into the measure $\mathbb{D} H:$ 

\begin{gather} 
\mathbb{D} H^\mu = \bigg( \prod_{i=0}^{K} \frac{dH^\mu_i}{\sqrt{2\pi}} \bigg) 
\Theta\bigg(g_{\mathcal S^\mu} (H^\mu) - \kappa\bigg)
\\
g_{\mathcal S^\mu}(H^\mu) = H_0^\mu + \min_{s \in \mathcal{S}^\mu} \sum_{i > 0} H^\mu_i s_i 
\end{gather}

Introducing Fourier representations of the delta functions for $H$ gives:

\begin{gather}
\int \prod_a d^Nw_a \delta(w_a^2-N)\bigg( \prod_\mu 
\int \mathbb{D} H^{\mu,a} \prod_{i=0}^{K}\int \frac{d\hat H^{\mu, a}_{i}}{ \sqrt{2\pi}} \bigg)
\mathbb E_{y,u} \exp\bigg[\sum_{\mu,a,i} i\hat H^{\mu, a}_{i} (H^{\mu, a}_i - y^\mu w^T_a u^\mu_i)\bigg]
\end{gather}

The average over the exponential term is:  

\begin{gather}
\mathbb E_{y,u} \exp\bigg\{
-\sum_{\mu, a, i,l} i\hat H^{\mu, a}_i y^\mu w_a^lu_{i,l}^\mu
\bigg\}
\\=\mathbb E_{y} \int \frac{\big[\prod_{il\mu} du_{il}^\mu\big]}{(2\pi)^{NP(K+1)/2}|C|^{N/2}}
\exp\bigg\{ 
-\frac N 2 \sum_{\mu, i, \nu, j, l } u^\mu_{i,l} u^\nu_{j,l} \big(C^{-1}\big)^{\mu,i}_{\nu,j} - i\sum_{\mu, a, i, l} \hat H^{\mu, a}_iy^\mu w_a^l u_{i,l}^\mu
\bigg\}
\end{gather}

Note that the average over the labels $ y $ cannot be performed analytically. As such, the average $ \mathbb E_y $ will not be written again until the end to avoid clutter. Defining the Cholesky decomposition of $C$, which satisfies $\sum_{\tau, k} L^{\mu,i}_{\tau,k} L^{\nu,j}_{\tau, k} = C^{\mu, i}_{\nu,j},$ we make the change of variables $ u_{i, l}^\mu \mapsto \sum_{\tau, k } L^{\mu, i}_{\tau, k} u_{k, l}^\tau $. This gives:

\begin{gather}
\int \frac{\big[\prod_{il\mu} du_{il}^\mu\big]}{(2\pi)^{NP(K+1)/2}}
\exp\bigg\{ 
-\frac N 2 \sum_{\mu, i, l } \big(u^{\mu}_{i,l}\big)^2 - i\sum_{\mu, a, i, l,\tau,k} \hat H^{\mu, a}_iy^\mu w_a^l L^{\mu,i}_{\tau,k} u_{k,l}^\tau
\bigg\}
\end{gather}

Integrating the $u$ and introducing the overlap matrix $Q_{a,b} = N^{-1}\sum_{l=1}^N w_a^l w_b^l$ then yields: 

\begin{align}
\int &dQ \prod_a \int dw_a \delta(Q_{a,a} -1) \prod_{b=1}^n\delta(N^{-1} w_a^T w_b - Q_{a,b})  \nonumber
\\ &\times
\bigg( \prod_\mu 
\int \mathbb{D} H^{\mu,a} \prod_{i=0}^{K}\int \frac{d\hat H^{\mu, a}_{i}}{\sqrt{2\pi}} \bigg)
\exp\bigg\{ 
-\frac 1 2\sum_{a,b,\nu,\mu, i, j} Q_{a,b} C_{\nu, j}^{\mu, i}\hat H^{\mu, a}_i \hat H^{\nu, b}_jy^\mu y^\nu
+ i\sum_{a,\mu,i}\hat H^{\mu, a}_i H^{\mu, a}_i
\bigg\}
\end{align}

Changing variables:  $ \hat H^{\mu, a }_i \mapsto y^\mu \sum_{\tau, k} [L^{-1}]_{\tau, k}^{\mu, i} \hat H^{\tau, a}_k $ and $ H^{\mu, a}_i \mapsto y^\mu \sum_{\tau, k}L^{\mu, i}_{\tau, k} H^{\tau, a}_k $ gives: 

\begin{align}
\int &dQ \prod_a \int dw_a \delta(Q_{a,a} -1) \prod_{b}\delta(N^{-1} w_a^T w_b - Q_{a,b}) \nonumber
\\
& \times\bigg( \prod_{a} \int_{\mathcal C(y,L)} \bigg[\prod_{\mu,i} dH^{\mu,a}_i\bigg] 
\int \bigg[ \prod_{\mu, i}\frac{d \hat H^{\mu,a}_i}{\sqrt{2\pi}}\bigg] \bigg) 
 \exp\bigg\{ 
-\frac 1 2\sum_{a,b\mu, i} Q_{a,b} \hat H^{\mu, a}_i \hat H^{\mu, b}_i
+ i\sum_{a,\mu,i}\hat H^{\mu, a}_i H^{\mu, a}_i
\bigg\},
\end{align}

where the integral over each matrix $H^{  , a}$ must now be taken over the set: 

\begin{gather}
\mathcal C(y, L) \equiv \bigg\{ 
H \in \mathbb{R}^{P \times (K+1)}: \forall \mu \in \mathbb{N}_1^p, \min_{s^\mu \in \mathcal{S}^\mu} y^\mu \sum_{k,\tau, i} H^{\tau}_k L^{\mu, i}_{\tau, k} s^\mu_i 
\geq \kappa 
\bigg\}  
\label{eq:C-constraint}
\end{gather}

Note that we have defined for convenience the additional element: $s_0^\mu =1,$ and that the sum is taken over: $ 0 \leq k, i \leq K $, and $ 1 \leq \mu, \tau \leq P $. Integrating the $ \hat H $ variables: 

\begin{gather}
\int dQ \prod_a \int dw_a \delta(Q_{a,a} -1) \prod_{b} \delta(N^{-1} w_a^T w_b - Q_{a,b}) \nonumber
\\ \times
\bigg( \prod_a \int_{\mathcal C(y,L)} \bigg[\prod_{\mu,i} dH^{\mu,a}_i\bigg] \bigg)
\exp\bigg\{ 
- \frac 1 2 \sum_{a,b,\mu,i} Q^{-1}_{a,b} H^{\mu, a}_i H^{\mu, b}_i 
- \frac {P(K+1)} 2 \log \det Q 
\bigg\}
\end{gather}

This is almost identical to the formula for the log volume for manifolds with heterogeneous shapes \cite{chung_classification_2018, cohen_separability_2020}. The only difference is that the integration over the $H$ is now constrained to the set $\mathcal C(y,L).$ As such, we can proceed just as in the case of uncorrelated manifolds with different shapes \cite{chung_classification_2018, cohen_separability_2020}, which we briefly describe here. We integrate over the $w_a$ variables by introducing Fourier representations of the delta functions and carrying out the Gaussian integral over the $w$. The auxiliary variables introduced through this process can then be integrated by saddle point, leading to a contribution $\exp\big[ \frac N 2 \log\det Q + \mathrm{const.}\big]$. From here, we assume replica symmetry: $Q_{a,b} = (1-q) \delta_{a,b} + q$ and apply a Hubbard-Stratonovich transformation to the off diagonal term in the summation over $H^{\mu,a}_i H^{\mu,b}_i Q^{-1}_{a,b}$ to arrive at: 

\begin{gather}
    \int \bigg[\prod_{a\neq b} dQ_{a,b}\bigg] \exp\bigg\{\frac{N - P(K+1)} 2 \log\det Q \bigg\} 
    \nonumber
    \\
    \times\int D_I T \bigg[\int_{\mathcal{C}(y,L)} \bigg(\prod_{\mu,i} dH^\mu_i \bigg)
    \exp\bigg\{ 
    -\frac{1}{2(1-q)} ||H - \sqrt{q}T||_2^2 
    + \frac{q}{2(1-q)}||T||^2_2
    \bigg\} 
    \bigg]^n ,
\end{gather}

where just as in the main text, $D_I T$ denotes the isotropic Gaussian measure: $\prod_{\mu,i} dT^\mu_i \exp[-\frac 1 2 (T^\mu_i)^2]/\sqrt{2\pi},$ and we have dropped the constant that emerged from the integral over the $w$. We now apply the identities: $\overline{f(x)^n} \doteq \exp[n \overline{\log f(x)}]$ and $\log\det Q \doteq n\log q + nq/(1-q),$ both of which hold as $n\to 0.$ This gives the expression: 

\begin{gather}
    \int dq \exp\bigg\{\frac{Nn}{2} \bigg( \frac{q}{1-q} + \big(1- \alpha(K+1)\big)\log(1-q)\bigg) + \int D_I T \log \int_{\mathcal{C}(y,L)} 
    \bigg(\prod_{\mu, i}dH^\mu_i\bigg)
    e^{-||H-\sqrt{q}T||^2_2/(2(1-q))}\bigg\} 
\end{gather}

We can see that $q$ will concentrate around its saddle point value in the large $N, P$ limit. This value of $q$ is fixed by the stationarity condition: 

\begin{gather}
    \frac{1}{(1-q)^2} - \frac{1-\alpha(K+1)}{1-q} 
    - \frac 1 {N{(1-q)^2}} \int D_I T \frac
    {\displaystyle\int_{\mathcal{C}(y,L)}\bigg(\prod_{\mu, i}dH^\mu_i\bigg)
    {||H-T||^2_2}e^{-||H-\sqrt{q}T||^2_2 /(2(1-q))}}
    {\displaystyle\int_{\mathcal{C}(y,L)}\bigg(\prod_{\mu, i}dH^\mu_i\bigg)
    e^{-||H-\sqrt{q}T||^2_2 /(2(1-q))}}
    =0
\end{gather}

Placing ourselves in the regime where we are at capacity requires that the volume of solutions shrinks to a single point. In this regime, the overlap between different solutions, $q,$ concentrates about 1 \cite{gardner_space_1988}. Therefore, we can use the above self-consistency condition to determine the value of $\alpha$ such that exactly one solution to the separation problem will exist with probability 1 in the thermodynamic limit. To do this, we note that as $q\to 1$, the dominant terms are those of order $(1-q)^{-2}$, and the integrals over the $H$ variables can be replaced by their values at the saddle point. Thus, to leading order in $N, P$, we can see that: 

\begin{gather}
\frac 1 {\alpha_{corr}(\kappa)}  = \mathbb E_y \int D_I T 
\min_{V \in \mathcal C(y,L) }\frac 1 P \sum_{\mu}^P ||V^\mu - T^\mu||^2_2
\label{eq:constr},
\end{gather} 

where we have reintroduced the expectation over the labels and have switched from using $H$ to $V$ to match the main text. As described below, we use this form of the inverse capacity for all numerical calculations. Changing variables once more: $ V^\tau_k \mapsto  \sum_{\eta, l }[L^{-1}]^{\tau,k}_{\eta, l}y^\eta V_l^\eta $ , and $ T_k^\tau \mapsto \sum_{\eta, l}[L^{-1}]^{\tau, k}_{\eta, l}y^\eta T_l^\eta $ then gives the stated result: 

\begin{gather}
\mathbb E_y \int D_{y,C}T \min_{V \in \mathcal A} \frac 1 P \sum_{\mu , \nu, i, j}
(V_i^\mu - T_i^\mu)(V^\nu_j - T_j^\nu ) \big( C^{-1} \big)^{\mu,i}_{\nu, j} y^\mu y^\nu  \label{eq:1}
\\
\mathcal A = \bigg\{V \in \mathbb R^{P \times (K+1)} :\forall \mu \in \mathbb N_1^P, \ V^\mu_0 +  \min_{s \in \mathcal{S}^\mu}
\sum_{i>0}V^\mu_i s_i  \geq \kappa 
\bigg\}
\label{eq:A-constraint}
\\
D_{y,C} T \equiv (2\pi)^{-P(K+1)/2} |C|^{-1/2} \exp\bigg\{ 
-\frac 1 2 \sum_{\mu, \nu, i, j} T^{\mu}_i T^{\nu}_j y^\mu y^\nu \big( C^{-1} \big)^{\mu,i}_{\nu, j}
\bigg\}
\bigg[\prod_{\mu=1}^{P} \prod_{i=0}^K dT_i^\mu \bigg] \; , 
\end{gather}

where we have used the fact that the determinant of $y^\mu y^\nu C^{\mu,i}_{\nu, j}$ is unaffected by the off-diagonal sign flips $y^\mu y^\nu$ to write the normalizing constant over the $T$ integral as $(2\pi)^{-P(K+1)/2} |C|^{-1/2}$. (This can be derived by considering the matrix integral $\int (\prod_{\mu,i} dT^\mu_i) \exp[-\frac 1 2 y^\mu y^\nu T^{\mu}_i T^\nu_j (C^{-1})^{\mu,i}_{\nu,j}]$ and changing variables $T^\mu_i \mapsto y^\mu T^\mu_i$.) 

\end{proof}

\section{The Capacity for Spheres with Low-Rank Correlations} 

\begin{claim} 
Consider a set of $P$ spheres of radius $r$, intrinsic dimension $K$, and at a distance $r_0$ from the origin, residing in $\mathbb{R}^N$. Given homogenous axis-axis and centroid-centroid correlations as defined in the main text Eq. 6, the capacity for these spheres is given by: 
\begin{gather} 
\frac {1}{\alpha_{corr}(\kappa)} =  K \big(\sqrt q - 1\big)^2  + \int_{-\infty}^{\hat\kappa(q)} \frac{d\xi}{\sqrt{2\pi}} 
e^{
-\frac 1 2\xi^2} 
\big(\xi  - \hat\kappa(q)\big)^2 \; ,
\label{eq:spheres}
\end{gather}

 where the scaled squared norm of the signed fields $ q \equiv \overline{{\sum_{i>0} (V^\mu_i)^2}} / ({(1-\lambda)K})$ and the effective margin $\hat \kappa(q)$ are fixed by the equations: 
 
\begin{gather}
\sqrt{q}  =  1 +  \frac{r\sqrt{1 - \lambda}}{r_0\sqrt{K(1 - \psi)}}
\int_{-\infty}^{\hat\kappa(q)}\frac{d\xi}{\sqrt{2\pi}} 
e^{
-\frac 1 2\xi^2} 
\big(\xi  - \hat\kappa(q) \big)
\\
\hat\kappa(q) = \frac{r\sqrt{K(1-\lambda)q} + \kappa}{r_0\sqrt{1 - \psi}}
\end{gather}

\end{claim}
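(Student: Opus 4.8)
\emph{Proof plan.} The plan is to specialise the general capacity formula of Claim~1 --- Eq.~\eqref{eq:general-T-supp}, or equivalently its isotropic‑Gaussian form \eqref{eq:constr} --- to the sphere geometry with the homogeneous low‑rank correlations of main‑text Eq.~6, and then carry out the resulting Gaussian optimisation explicitly. First I would collect the special‑case ingredients. Since the shape set $\mathcal S^\mu$ of a sphere of radius $r$ is a Euclidean ball, the inner minimisation in $\mathcal A$ collapses, $\min_{s\in\mathcal S^\mu}\sum_{i>0}V^\mu_i s_i=-r\,\|V^\mu_{>0}\|$, so $\mathcal A$ becomes a product over $\mu$ of shifted convex cones $\{V^\mu_0-r\|V^\mu_{>0}\|\ge\kappa\}$. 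The homogeneous correlations make $C$ block‑diagonal across the $K+1$ axis indices, each block being a rank‑one perturbation of the identity: $r_0^2[(1-\psi)I+\psi\,\mathbf{1}\mathbf{1}^{T}]$ for the centroid and $[(1-\lambda)I+\lambda\,\mathbf{1}\mathbf{1}^{T}]$ for each of the $K$ axes (as $P\times P$ matrices in the manifold index).

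Next I would take $P\to\infty$ and decouple the $P$ manifolds. The all‑ones eigenvector of each block has eigenvalue of order $P$, hence carries vanishing weight in the inverse‑covariance Mahalanobis norm; combined with the label signs $y^\mu$ (and using, as already noted at the end of the Claim~1 derivation, that the determinant and the large‑$P$ spectrum are unaffected by the off‑diagonal flips $y^\mu y^\nu$), the sample means $P^{-1}\sum_\mu V^\mu_i$ and $P^{-1}\sum_\mu T^\mu_i$ decouple as global variables that vanish by symmetry. What remains is a centred Mahalanobis norm equal to a sum of $P$ statistically identical single‑manifold terms, so $P^{-1}\mathbb E_{y,u}$ collapses to an expectation over one manifold's Gaussian fields (the label can be absorbed by a reflection, since the ball is centrally symmetric). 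After rescaling the components by the block variances, $1/\alpha_{corr}(\kappa)$ reduces to $\mathbb E_{z_0,\vec z}\,\min_{\hat v_0,\hat v_{>0}}\big[(\hat v_0-z_0)^2+\|\hat v_{>0}-\vec z\|^2\big]$ subject to $\hat v_0-\beta\|\hat v_{>0}\|\ge\tilde\kappa$, with $\beta=r\sqrt{1-\lambda}/(r_0\sqrt{1-\psi})$ and $\tilde\kappa=\kappa/(r_0\sqrt{1-\psi})$, where $z_0\sim\mathcal N(0,1)$ and $\vec z\sim\mathcal N(0,I_K)$ are independent. This is precisely the single‑ball‑manifold optimisation treated in \cite{chung_classification_2018, cohen_separability_2020}, now with effective radius $\beta$ and margin $\tilde\kappa$ dressed by the correlations; in particular at $\lambda=\psi=0$ it reduces to the known capacity of a sphere of radius $r$ at distance $r_0$.

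Then I would solve this convex program in the standard way. By rotational symmetry among the $K$ shape directions the optimiser $\hat v_{>0}$ is radial, $\hat v_{>0}=(\|\hat v_{>0}\|/\|\vec z\|)\vec z$, and $\|\vec z\|$ concentrates at $\sqrt K$; introducing the self‑averaging order parameter $q$ for the (rescaled) squared shape‑field norm --- so that $\|\hat v_{>0}\|=\sqrt{Kq}$ and $q=\overline{\|V^\mu_{>0}\|^2}/((1-\lambda)K)$ after undoing the rescaling --- the shape directions contribute $(\|\hat v_{>0}\|-\|\vec z\|)^2=K(\sqrt q-1)^2$. With $\|\hat v_{>0}\|$ fixed, the cone constraint reduces to the half‑line condition $\hat v_0\ge\hat\kappa(q)$ with $\hat\kappa(q)=\tilde\kappa+\beta\sqrt{Kq}=(r\sqrt{K(1-\lambda)q}+\kappa)/(r_0\sqrt{1-\psi})$, and the residual one‑dimensional problem is the projection of $\xi=z_0$ onto $[\hat\kappa(q),\infty)$: it costs nothing when $\xi\ge\hat\kappa(q)$ and $(\xi-\hat\kappa(q))^2$ otherwise, i.e.\ contributes $\int_{-\infty}^{\hat\kappa(q)}\frac{d\xi}{\sqrt{2\pi}}e^{-\xi^2/2}(\xi-\hat\kappa(q))^2$. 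Adding the two contributions gives Eq.~\eqref{eq:spheres}, and extremising over $q$ --- equivalently, demanding consistency between the value of $q$ and the norm of the optimal shape field --- yields $\sqrt q=1+\frac{r\sqrt{1-\lambda}}{r_0\sqrt{K(1-\psi)}}\int_{-\infty}^{\hat\kappa(q)}\frac{d\xi}{\sqrt{2\pi}}e^{-\xi^2/2}(\xi-\hat\kappa(q))$, which closes the system.

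The step I expect to be the main obstacle is the $P\to\infty$ reduction of the second paragraph: making precise that the rank‑one global modes and the sample means really do decouple away, so that the inverse‑covariance norm collapses into $P$ independent single‑manifold pieces, and --- relatedly --- that the shape‑field norm is self‑averaging, so that promoting $\|\hat v_{>0}\|^2/K$ to an order parameter and extremising over it is legitimate (this is where the argument must parallel, rather than merely cite, the treatment of uncorrelated heterogeneous‑shape manifolds). The rest is bookkeeping: carrying the scales $r,r_0,\lambda,\psi,\kappa$ through the linear changes of variables so that the $(1-\lambda)$, $(1-\psi)$ and radius factors land exactly where the statement puts them.
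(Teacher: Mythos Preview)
Your plan is sound and reaches the same endpoint, but it takes a genuinely different route from the paper. The paper does \emph{not} specialise Eq.~\eqref{eq:general-T-supp} directly; instead it recasts the inverse capacity as the $\beta\to\infty$ limit of a Gibbs free energy, $-2\partial_\beta (P^{-1}\mathbb E_{T,y}\log Z)$, and then runs a full replica calculation on $\mathbb E_{T,y}Z^n$: integrating out $T$, introducing replica overlap order parameters $Q^\mu_{a,b}=K^{-1}\sum_{i>0}V^\mu_{a,i}V^\mu_{b,i}$ and signed sample means $F_{a,i}=P^{-1}\sum_\mu y^\mu V^\mu_{a,i}$, assuming a replica‑symmetric ansatz $Q^\mu_{a,b}=(q_0-q_1)\delta_{a,b}+q_1$ uniform in $\mu$, evaluating the resulting integrals by saddle point in two technical lemmas, and finally extremising $\mathcal V(q_0,q_1)$ and taking $\beta\to\infty$. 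Your order parameter $q$ is their $q_0/(1-\lambda)$, and what you call the ``decoupling of the rank‑one global modes'' appears in the paper as the saddle‑point equations for $F,\hat F$ admitting the solution $F=\hat F=0$.

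Your approach is more elementary and more transparent: you read off from Sherman--Morrison that the off‑diagonal entries of $C^{-1}$ are $O(1/P)$, so the Mahalanobis norm becomes $(1-\psi)^{-1}\sum_\mu(V^\mu_0-T^\mu_0)^2+(1-\lambda)^{-1}\sum_{\mu,i>0}(V^\mu_i-T^\mu_i)^2$ plus negligible mean‑field corrections, and then you land directly on the known single‑sphere problem of \cite{chung_classification_2018} with dressed radius and margin. This makes manifest why the homogeneous correlations act only through the rescalings $r\mapsto r\sqrt{1-\lambda}$, $r_0\mapsto r_0\sqrt{1-\psi}$. The paper's replica route, by contrast, furnishes precisely the piece you flag as the main obstacle: the self‑averaging of $\|V^\mu_{>0}\|^2$ across $\mu$ is \emph{assumed} there as part of the replica‑symmetric ansatz $Q^\mu=Q$, and the vanishing of the global means is obtained as a bona fide saddle‑point condition rather than argued by symmetry. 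So your derivation is shorter and conceptually cleaner, while the paper's is more systematic and makes the status of the mean‑field/self‑averaging assumptions explicit within the standard replica machinery.
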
 

\begin{proof}[Derivation:] To start with, we rewrite the formula for the inverse capacity given in Supplementary Eq. 1 in terms of a Gibbs measure \cite{gardner1988optimal}: 

\begin{align}
\frac 1 {\alpha_{cor}(\kappa)} = \lim_{\beta \to \infty }\lim_{P\to\infty}  -\frac 2 P \frac {\partial}{\partial \beta}\mathbb E_{T,y} \log \int &\bigg[ \prod_{\mu, i} dV^\mu_i \bigg] \exp\bigg[ -\frac \beta 2 y^\mu y^\nu \Lambda^{\mu, i}_{\nu, j}(V^\mu_i - T^\mu_i)(V^\nu_j - T^\nu_j)\bigg]
\\
&\times \prod_\mu \Theta\bigg( r_0 V^\mu_0 - \kappa  + \min_{s \in \mathcal S^\mu} 
\sum_{i>0} V^\mu_i s_i r \bigg) \; ,
\nonumber
\end{align}

where $\Lambda^{\mu,i}_{\nu,j}$ is the inverse covariance tensor, $(C^{-1})^{\mu,i}_{\nu,j}$. In this form, we can see that the capacity can be derived from the disorder averaged free energy density, $-(\beta P)^{-1} \mathbb E_{T,y} \log Z $. We calculate this using the replica method \cite{mezard_information_2009, mezard_spin_1986}. Here and in the remainder of the calculation, we implicitly sum over all indices in the exponent unless noted otherwise. That is: $\exp[f(x_{a,b})] \equiv \exp[\sum_{a,b} f(x_{a,b})].$  By the Sherman-Morrison formula \cite{hager1989updating}, $\Lambda^{\mu,i}_{\nu,j}$ has entries: 

\begin{align} 
\Lambda^{\mu,i}_{\nu,j}  &=
\begin{dcases}
\displaystyle
\delta_{ij}\bigg[ \frac {\delta_{\mu,\nu}} {1 - \lambda}  - \frac{\lambda}{(1 - \lambda)(1 + (P-1)\lambda )} \bigg] & \text{for } i > 0 
\\ 
\displaystyle
\frac {\delta_{\mu,\nu}} {1 - \psi}  - \frac{\psi}{(1 - \psi)(1 + (P-1)\psi )} & \text{for } i = j = 0 
\\
\displaystyle
0 & \text{for } i = 0, j \neq 0 
\end{dcases}
\\
&\doteq
\begin{dcases}
\displaystyle
\delta_{ij}\bigg[ \frac {\delta_{\mu,\nu}} {1 - \lambda}  - \frac{1}{P(1 - \lambda)} \bigg] & \text{for } i > 0 
\\ 
\displaystyle
\frac {\delta_{\mu,\nu}} {1 - \psi}  - \frac{1}{P(1 - \psi)} & \text{for } i = j = 0 
\\
\displaystyle
0 & \text{for } i = 0, j \neq 0  \; , 
\end{dcases}
\end{align}

where $\doteq$ denotes equality to leading order in $P$. Using the assumption of spherical manifolds allows us to carry out the constrained minimization in the $\Theta$ functions above: 

\begin{gather} 
r_0 V_0^\mu + \min_{s \in \mathcal S^\mu }\sum_{i>0} r V^\mu_i s_i  =r_0 V_0^\mu - r\sqrt{\sum_{i>0} \big(V^\mu_i\big)^2}  
\label{eq:kkt-constraint}
\end{gather} 

To see this, we apply the KKT conditions to the Lagrangian: $L(s, \eta) = \langle V, s \rangle + \eta (||s||^2 -1)$ \cite{boyd2004convex}. The KKT conditions read:

\begin{align}
    2\eta s  &= - V 
    \\
    \eta &\geq 0 
    \\
    ||s||^2  &\leq 1 
    \\
    \eta (||s||^2 -1 ) & = 0 
\end{align}

For all non-zero V, we must therefore have $\eta > 0,$ from which it follows that $s = - V/||V||$ at the minimum, establishing the identity in Eq. \eqref{eq:kkt-constraint}. Using this simplification and replicating the partition function above $n$ times then gives: 

\begin{gather} 
\mathbb E_{T,y} Z^n = \mathbb E_{T,y} \int \bigg[ \prod_{\mu, i, a} dV^\mu_{a,i}\bigg]
\exp\bigg[ 
 -\frac \beta 2 y^\mu y^\nu \Lambda^{\mu, i}_{\nu, j}(V^\mu_{a,i} - T^\mu_i)(V^\nu_{a,j} - T^\nu_j)
\bigg]  \prod_{a,\mu} \Theta\bigg(V^\mu_{a,0} - \frac{r}{r_0} ||V^\mu_{a,i>0}|| - \frac\kappa r_0 \bigg) \; ,
\end{gather}

where we have used $||V_{i>0}||$ to denote the norm of the last $K$ components of $V^\mu_a$. That is, $||V^\mu_{a,i>0}|| \equiv \sqrt{\sum_{i>0} \big(V^\mu_{a,i}\big)^2}$. The expectation over $ T $ is: 

\begin{gather} 
\int d^{P\times (K+1)}T \exp\bigg[ 
-\frac {1+\beta n} 2 T^\mu_i T^\nu_j y^\mu y^\nu \Lambda^{\mu, i}_{\nu, j}
+ \beta T^\mu_i y^\mu \Lambda^{\mu, i}_{\nu, j} y^\nu V^\nu_{a,j}
- \frac 1 2 \log \det yy^T\circ  \Lambda - \frac{P(K+1)} 2 \log 2\pi 
\bigg]
\\
= 
\exp\bigg[ 
\frac {\beta^2}{2(1+\beta n)}
V^\mu_{a,i}V^\nu_{b,j} y^\mu y^\nu \Lambda^{\mu,i}_{\nu, j}
- \frac {P(K+1)} 2 \log(1 + n\beta)
\bigg]
\end{gather}

Here we have slightly abused notation to write: $(yy^T \circ \Lambda)_{\nu,j}^{\mu,i} = y^\mu y^\nu \Lambda^{\mu,i}_{\nu,j}.$ Note that the replicas $V_a$ are now coupled after integrating out the quenched disorder $T.$ Reinserting this back into the integral and expanding the terms $\log(1+n\beta)$, $\Lambda$, and $1/2(1+\beta n)$ to first order in the $ n\to 0 $ and $ P\to\infty $ limits: 

\begin{align}
\mathbb E_{T,y} Z^n \doteq \mathbb E_y \int &\bigg[ \prod_{\mu, a}\prod_{i>0 } dV^\mu_{a,i}\bigg]
\exp\bigg[ 
 -\sum_{i>0} \frac \beta {2({1 - \lambda)}} \bigg\{ 
  \big(V^\mu_{a,i} \big)^2 
 - \frac 1 {P} \bigg( \sum_\mu y^\mu V^\mu_{a,i}\bigg)^2
 \bigg\} \label{eq:full-int}
 \\
 &+\frac {\beta^2} {2 (1 - \lambda)} \bigg\{ 
  V^\mu_{a,i}V^\mu_{b,i}
 -\frac 1 P \bigg(\sum_{a,\mu}y^\mu V^{\mu}_{a,i}\bigg)^2
 \bigg\}
 - \frac {\beta P(K+1)n} 2 
\bigg] \nonumber
\\
\times\int &\bigg[ \prod_{\mu, a} dV^\mu_{a,0}\bigg]
\exp\bigg[ 
 -\frac \beta {2{(1 - \psi)}} \bigg\{ 
 \big(V^\mu_{a,0} \big)^2 
 -  \frac 1 P \bigg( \sum_\mu y^\mu V^\mu_{a,i}\bigg)^2
 \bigg\} \nonumber
 \\
 &+\frac {\beta^2} {2({1 - \psi})} \bigg\{ 
 V^\mu_{a,i}V^\mu_{b,i}
 -\frac 1 P \bigg(\sum_{a,\mu}y^\mu V^{\mu}_{a,0}\bigg)^2
 \bigg\}
\bigg] 
\prod_{a,\mu} \Theta\bigg(V^\mu_{a,0} - \frac{r}{r_0}||V_{i>0}||- \frac \kappa r_0 \bigg)
\nonumber
\end{align}

Note that we freely ignore constants which do not affect the final result. The important points here are: (1) the only interaction between manifolds happens through the mean, $\sum_\mu y^\mu V^\mu_{a,i}$, and (2) the only interactions between the replicas happens through the quadratic interaction terms, $V^\mu_{a,i} V^\mu_{b,i}$. These considerations motivate the following substitutions, which we enforce using delta functions: 

\begin{gather}
Q^\mu_{a,b} = \frac 1 K \sum_{i>0} V^\mu_{a,i}V^\mu_{b,i}
\\
F_{a,i}= \frac 1 P \sum_{\mu} y^\mu V_{a,i}^\mu 
\end{gather}

In this way, equation \eqref{eq:full-int} becomes: 
\begin{align} 
\mathbb E_y\int
\bigg[\prod_{a \leq b} dQ_{a,b} \bigg] 
\exp\bigg[
&-\frac{K\beta}{2(1-\lambda)}
\big(Q_{a,a}^\mu - \beta Q_{a,b}^\mu \big)
-\frac {P(K+1)\beta n}{2}
+ \log  \mathcal{S}\big(Q \big) 
+ \log \mathcal{U} \big(Q\big)
\bigg],
\end{align}
where we have defined for convenience $Q_{a,b} \equiv Q_{b,a}$, and $\mathcal U $ and $\mathcal S$ are the remaining integrals over the $V^\mu_{a,i}$ and the $F$: 

\begin{align}
\mathcal S(Q) \equiv
 \int \bigg[\prod_{a,i>0} dF_{a,i}\bigg]&\exp
 \bigg[
 \frac {\beta}{2(1 - \lambda)} \bigg\{ 
 P F_{a,i}^2 - \beta P \bigg( \sum_a F_{a,i} \bigg)^2
 \bigg\}
 \bigg]
 \int \bigg[\prod_{a,\mu} \prod_{i>0} dV^\mu_{a,i} \bigg]
 \\
&\times\bigg[\prod_{\mu}\prod_{a\leq b}\delta\bigg(Q_{a,b}^\mu - K^{-1} \sum_{i>0} V^\mu_{a,i} V^\mu_{b,i}\bigg)
\bigg]
\prod_{i,a} \delta \bigg(
F_{a,i} - P^{-1}\sum_\mu  y^\mu V^\mu_{a,i} \bigg)
 \label{eq:S-int}
\end{align} 

\begin{align} 
\mathcal U(Q) \equiv  \int&\bigg[\prod_{a} F_{a,0}\bigg]\exp
 \bigg[
 \frac {\beta}{2(1 - \psi)} \bigg\{ 
 P F_{a,0}^2 - \beta P \bigg( \sum_a F_{a,0} \bigg)^2
 \bigg\}
 \bigg]
 \\
 &\times\int \bigg[\prod_{a,\mu} 
dV^\mu_{a,0}\bigg]
\bigg[ \prod_{i} \delta \bigg(
F_{a,0} - P^{-1}\sum_\mu y^\mu V^\mu_{a,0}
\bigg)\bigg] \nonumber
\prod_{a,\mu} \Theta\bigg(V^\mu_{a,0} - R\sqrt{Q_{a,a} }- \kappa r_0^{-1} \bigg)
\\
&\times\exp\bigg[-\frac \beta {2(1 - \psi)} \big(V^\mu_{a,0} \big)^2 
+ \frac {\beta^2} {2(1 - \psi)} V^\mu_{a,0} V^\mu_{b,0}
\bigg] \label{eq:F-int}
\end{align} 

Note that we have defined $R \equiv \sqrt{K} r/r_0$. 

In order to evaluate $\mathcal U (Q)$ and $\mathcal S(Q)$ in closed form, we now make the replica symmetric ansatz: 

\begin{gather}
Q^{\mu}_{a,b} = \delta_{a,b}(q_0 - q_1) + q_1
\end{gather}

Note that unlike typical assumptions of replica symmetric ansatz \cite{mezard_information_2009, mezard_spin_1986}, we have to assume symmetry across all manifolds: $Q^\mu = Q$. This assumption is motivated by the fact that the interactions between all manifolds are symmetric. Under this assumption, the function $S(Q)$ can be estimated by saddle point as described in ~\cref{lemma:S}. In this way we obtain in the $n\to 0 $ limit: 
\begin{align} 
\int
dq_0 dq_1
\exp\bigg[
&-\frac{nPK\beta}{2(1-\lambda)}
\big(q_0 - \beta (q_0 - q_1) \big)
-\frac {P(K+1)\beta n}{2}
+ \frac 1 2 PK \log\det Q  
+ \log\mathbb E_y \mathcal{U} \big(Q\big)
\bigg],
\end{align}

where in the $n \to 0$ limit we have that: 

\begin{gather} 
\log \det Q = n\log (q_0 - q_1) + n \frac {q_1} {q_0 - q_1}
\end{gather}

In ~\cref{lemma:F}, we evaluate the $\mathbb E_y \mathcal U(Q)$ by saddle point. Plugging this result into the integral gives: 
\begin{align} 
\int
dq_0 dq_1
\exp\bigg[&\frac{PKn}{2}\bigg\{
-\frac{\beta}{1-\lambda}
\big(q_0 - \beta (q_0 - q_1) \big)
-\beta
+ \log (q_0 - q_1) 
+ \frac{q_1}{q_0 - q_1} \bigg\}
\\
&+nP \int \mathcal D_\psi \xi
\log \tail\bigg( \sqrt{\frac{\beta}{1 - \psi}} 
\big( R\sqrt{q_0} + \frac \kappa r_0 - \xi \big)
\bigg)
\bigg]\nonumber,
\end{align} 

where, as below, we have used $\tail$ to denote the unnormalized Gaussian tail function: $\mathcal H (x) \equiv \int_x^\infty dt \exp(-t^2/2)$ and $\mathcal D_\psi \xi$ to denote the zero-mean Gaussian measure with variance $1 - \psi$. Pulling out the $P, n$ factors, we can see that the replicated partition can be written as: 

\begin{gather}
\mathbb E_{T,y} Z^n \doteq \int dq_0 dq_1 e^{nP\mathcal{V}(q_0, q_1)}
\label{last-int} 
\end{gather}

If we estimate the remaining integral by saddle point in the large $P$ limit, we can use the identity $\overline{\log Z} = \lim_{n\to 0} n^{-1} \log \overline{Z^n}$ to arrive at:  

\begin{gather} 
\frac 1 P \mathbb E_{T,y} \log Z \doteq \extr_{q_0, q_1} \mathcal V (q_0, q_1) 
\end{gather}

We therefore have to solve $\nabla V(q_0, q_1) = 0$. Just as in \cref{lemma:F} we can use the expansion of $\tail(\sqrt{\beta} x) $ as $\beta \to\infty$ given in equation \ref{eq:tail-expansion} to see that: 

\begin{align}
\partial_{q_0} \int \mathcal D_\psi \xi
\log \tail\bigg( \sqrt{\frac{\beta}{1 - \psi}} 
\big( R\sqrt{q_0} +  \kappa r_0^{-1} - \xi \big)
\bigg) &=  
- \frac{R\sqrt{\beta}}{{2\sqrt{q_0(1 - \psi)}}}
\int \mathcal D_\psi \xi 
\frac{\exp\bigg[-\frac{\beta(R\sqrt{q_0} + \kappa r_0^{-1} - \xi)^2}{2(1-\psi)}\bigg] }
{\tail\bigg( \sqrt{\frac{\beta}{1 - \psi}} 
\big( R\sqrt{q_0} + \kappa r_0^{-1} - \xi \big)
\bigg)}
\nonumber
\\
&\doteq
- \frac{R{\beta}}{{2(1 - \psi)\sqrt{q_0}}}\int_{-\infty}^{R\sqrt{q_0} + \kappa r_0^{-1}} \mathcal D_\psi \xi (R\sqrt{q_0} - \kappa r_0^{-1} - \xi)
\end{align}

Using this expansion, the function $\mathcal V$ admits two different pairs of saddle points. The meaningful solution is given by:  

\begin{gather} 
q_1 = q_0 - \frac 1 \beta \sqrt{q_0 (1 - \lambda)} 
+ O(\beta^{-2})
\\
\sqrt{q_0}  =  \sqrt{1 - \lambda} +  \frac{r(1 - \lambda)}{r_0\sqrt{K(1 - \psi)}}
\int_{-\infty}^0 \frac{d\xi}{\sqrt{2\pi}} 
e^{
-\frac 1 2\big(\xi + \frac{r\sqrt{Kq_0} +\kappa }{r_0\sqrt{1 - \psi}}\big)^2} 
\xi 
\end{gather}

Denoting the solutions to the above equations as $q_0^*, q_1^*$, the inverse capacity is then given by

\begin{align} 
- 2 \frac{\partial}{\partial \beta}\mathcal V(q_0^*, q_1^*) 
= K \bigg(\sqrt{\frac{q_0^*}{1 - \lambda}} - 1\bigg)^2  + \int_{-\infty}^0  \frac{d\xi}{\sqrt{2\pi}}e^{-\frac 1 2
\big(\xi +\frac{r \sqrt{Kq_0^*} + \kappa }{r_0\sqrt{1 - \psi}} 
\big)^2
} \xi^2,
\end{align}

where again we have used the same expansion of $\tail$ to evaluate the partial derivative with respect to $\beta$. From here, changing $q_0 \mapsto q_0 (1 - \lambda)$ then gives the stated result. 

\end{proof} 

\begin{lemma} \label{lemma:S} 
Under the replica symmetric ansatz and as $\beta,P \to \infty$ and $n\to 0$,  the function $S(Q)$  is asymptotic to: 

\begin{gather}
\exp\bigg[\frac 1 2 PK \log \det Q + \mathrm{const.}  \bigg] \; ,
\end{gather}

where $\mathrm{const.}$ denotes terms which do not depend on $\beta$ or $Q$. 

\end{lemma}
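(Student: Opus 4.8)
\emph{Proof strategy.} The plan is to carry out the $F$- and $V$-integrals in $\mathcal S(Q)$ directly and to show that, apart from the volume of the constraint surface $\{V:V^\mu(V^\mu)^{\!\top}=KQ^\mu\}$, everything is either a $\beta,Q$-independent constant or subextensive in $P$. First I would collapse the $F$-integral in \eqref{eq:S-int}: the delta functions $\prod_{i,a}\delta(F_{a,i}-P^{-1}\sum_\mu y^\mu V^\mu_{a,i})$ simply slave $F_{a,i}$ to the magnetization $m_{a,i}\equiv\sum_\mu y^\mu V^\mu_{a,i}$, so $\mathcal S(Q)$ reduces to an integral over the $V^\mu_{a,i}$, $i>0$, alone, over the compact surface $\prod_\mu\{V^\mu(V^\mu)^{\!\top}=KQ^\mu\}$, carrying a Gaussian-in-$m$ weight (the apparent growth of that weight in $m$ is harmless because $V$ is compactly supported, so there is no divergence). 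Writing this as (total volume of the surface) $\times$ (expectation over the product of uniform measures on the surfaces, which under the replica-symmetric ansatz $Q^\mu=Q$ are all the same), call these $\Omega(Q)$ and $\langle e^{(\cdots)}\rangle$, so $\mathcal S(Q)=\Omega(Q)\,\langle e^{(\cdots)}\rangle$.

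Next I would evaluate $\Omega(Q)$. For an $n\times K$ matrix with $K\ge n$, $\int dV\,\delta(VV^{\!\top}-A)$ equals an $(n,K)$-dependent constant times $(\det A)^{(K-n-1)/2}$; applying this with $A=KQ^\mu$ and multiplying over the $P$ manifolds gives
\begin{gather}
\log\Omega(Q)=\frac{K-n-1}{2}\sum_\mu\log\det Q^\mu+\mathrm{const.}\;\doteq\;\frac12 PK\log\det Q+\mathrm{const.},
\end{gather}
the discarded piece being $O(nP\log\det Q)$, smaller by a factor $K$ than the term retained and hence irrelevant to the free-energy density. This already produces the asserted leading behaviour.

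It then remains to show that $\langle e^{(\cdots)}\rangle$ contributes only a constant. Conditioned on the replica-symmetric $Q$ the $V^\mu$ are i.i.d.\ and centred, and invariance of the uniform measure under $V^\mu\mapsto V^\mu R$ with $R$ a $K\times K$ orthogonal matrix gives $\mathbb E[V^\mu_{a,i}V^\mu_{b,j}]=\delta_{ij}Q^\mu_{a,b}$. By a central limit theorem $P^{-1/2}m_{a,i}$ is then asymptotically Gaussian with covariance $\delta_{ij}\bar Q_{a,b}$, where $\bar Q_{a,b}=\delta_{a,b}(q_0-q_1)+q_1$, so the weight factorizes over the $K$ axis directions into $n$-dimensional Gaussian integrals, each equal to $\det(I_n-M)^{-1/2}$ with $M=\tfrac{\beta}{1-\lambda}\bigl(\bar Q-\beta\,\bar Q^{1/2}\mathbf{1}\mathbf{1}^{\!\top}\bar Q^{1/2}\bigr)$. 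Hence $\log\langle e^{(\cdots)}\rangle=-\tfrac K2\operatorname{tr}\log(I_n-M)$, and since the eigenvalues of $M$ have total multiplicity $n$ and finite $n\to0$ limits, $\operatorname{tr}\log(I_n-M)=O(n)$ and the whole factor is $O(nK)$ — negligible against the $O(nPK)$ volume term and absent from the free-energy density. Combining the three pieces yields $\mathcal S(Q)\doteq\exp[\tfrac12 PK\log\det Q+\mathrm{const.}]$.

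The main obstacle is making this last estimate rigorous uniformly in $\beta$: at capacity and as $\beta\to\infty$ one has $q_0-q_1=O(1/\beta)$, so $I_n-M$ becomes marginal ($M\to I_n$ along the replicon direction) and the Gaussian integral is on the verge of diverging. One must therefore either keep $\beta$ finite until after the order-parameter saddle point is taken, or argue directly from the hard bound $|m_{a,i}|\le\sqrt{PK\|Q\|}$ that the magnetization fluctuations stay controlled. An equivalent and perhaps cleaner route avoids this by introducing conjugate fields $\hat Q^\mu_{a,b}$ for the $Q$-constraints, performing the (then Gaussian) $V$-integral exactly, and taking the replica-symmetric saddle over $\hat Q^\mu$; this self-consistently gives $\hat Q\propto Q^{-1}$ and hence the $\det Q$, the $F$-sector again splitting off as an $O(nK)$ correction. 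Either way the content is the same: the only extensive — and the only $Q$-dependent — contribution of $\mathcal S(Q)$ is the entropy $\tfrac12 PK\log\det Q$ of the axis coordinates at fixed overlaps, and in particular $\mathcal S(Q)$ carries no $\beta$-dependence into the free energy, the mean-field coupling between manifolds in the axis directions being subextensive.
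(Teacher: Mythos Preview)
Your primary route—collapsing the $F$-delta functions, computing the constrained volume exactly via the Wishart-type identity $\int dV\,\delta(VV^{\!\top}-A)\propto(\det A)^{(K-n-1)/2}$, and then controlling the residual magnetization weight by a CLT—is different from what the paper actually does. The paper takes precisely the alternative you sketch at the end: it introduces Fourier conjugates $\hat Q^\mu$ and $\hat F$, performs the resulting Gaussian $V$- and $\hat F$-integrals, argues that the remaining $F$-integral is subleading in $P$, imposes the replica-symmetric ansatz directly on the conjugate variables $\hat Q^\mu=\hat Q$, and saddle-points in the single $\hat Q$ with large parameter $PK$, landing on $\tfrac12 PK\log\det Q$ with no further work. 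Your direct computation is conceptually cleaner and sidesteps the saddle entirely, but note that it yields the exponent $\tfrac12 P(K-n-1)\log\det Q$: since $\log\det Q=O(n)$ under the replica-symmetric ansatz, the piece you discard is $O(Pn)$—the \emph{same} order in $P$ and $n$ as the piece you keep—and is smaller only by the finite factor $K$, so your ``irrelevant to the free-energy density'' strictly needs $K\gg 1$ rather than just $P\to\infty$. The paper's conjugate-field saddle produces $K$ rather than $K{-}1$ because replica-symmetrizing $\hat Q^\mu$ \emph{before} saddle-pointing collapses $P$ separate fluctuation determinants into a single $O(\log P)$ correction; your exact volume computation keeps them and in effect recovers the one-loop correction to the paper's saddle. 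You also correctly flag the other real obstacle—the marginal $\beta\to\infty$ behaviour of the magnetization Gaussian when $q_0-q_1=O(1/\beta)$—and the remedies you propose (defer $\beta\to\infty$ past the $q$-saddle, or go via $\hat Q$) are exactly what the paper's route supplies automatically, since it never isolates the magnetization weight as a separate moment-generating functional.
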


\begin{proof}[Derivation:]
Introducing Fourier representations of the delta functions in equation \ref{eq:S-int} gives: 

\begin{gather} 
\int \bigg[ \prod_{\mu, a\leq b} d \hat Q^\mu_{a,b}\bigg] 
\int \bigg[\prod_{i>0, a} dF_{a,i} d\hat F_{a,i}
\bigg] 
\exp \bigg[ 
\frac i 2 K\sum_{a\leq b} Q_{a,b} \hat Q^\mu_{a,b}
+ \sum_{a,i}\frac {\beta}{2(1 - \lambda)} \bigg\{ 
 P F_{a,i}^2 - \beta P \bigg( \sum_a F_{a,i} \bigg)^2
\bigg\}
\bigg] \nonumber
\\
\times\exp\bigg[  i P F_{a,i} \hat F_{a,i} \bigg]
\int \bigg[\prod_{a,\mu} \prod_{i>0} dV^\mu_{a,i}
\bigg]
\exp\bigg[ 
-\frac i 2 \sum_{a\leq b}\hat Q^\mu_{a,b} V^\mu_{a,i} V^\mu_{b,i}
- i y^\mu \hat F_{a,i} V^\mu_{a,i}
\bigg] 
\end{gather}

It is convenient to now rewrite the ordered sum over $a\leq b$ indices as an unordered sum over all pairings $(a,b).$ As above, we define $Q_{b,a} \equiv Q_{a,b}$ and $\hat Q_{b,a} \equiv Q_{a,b}$ for $ a > b$. If we further change $\hat Q_{a,b} \mapsto 2\hat Q_{a,b}$ for all $a\neq b$, we can eliminate the unordered sums as desired, leaving: 

\begin{gather} 
\int \bigg[ \prod_{\mu, a\leq b} d \hat Q^\mu_{a,b}\bigg] 
\int \bigg[\prod_{i>0, a} dF_{a,i} d\hat F_{a,i}
\bigg] 
\exp \bigg[ 
\frac i 2 K Q_{a,b} \hat Q^\mu_{a,b}
+\frac {\beta}{2(1 - \lambda)} \bigg\{ 
 P F_{a,i}^2 - \beta P \bigg( \sum_a F_{a,i} \bigg)^2
\bigg\}
\bigg] \nonumber
\\
\times\exp\bigg[  i P F_{a,i} \hat F_{a,i} \bigg]
\int \bigg[\prod_{a,\mu} \prod_{i>0} dV^\mu_{a,i}
\bigg]
\exp\bigg[ 
-\frac i 2 \hat Q^\mu_{a,b} V^\mu_{a,i} V^\mu_{b,i}
- i y^\mu \hat F_{a,i} V^\mu_{a,i}
\bigg] \; ,
\end{gather}

where as usual we neglect constants which do not depend on either the variables of integration or $\beta$, as they will not affect the final result. The integrals over the $V$ followed by the $\hat F$ variables are now both standard Gaussian integrals. They yield: 

\begin{gather}
\int \bigg[ \prod_{\mu, a\leq b} d \hat Q^\mu_{a,b}\bigg] \bigg[\prod_{a}\prod_{i>0} dF_{a,i}\bigg] 
\exp\bigg[ 
\frac i 2 K Q_{a,b} \hat Q^\mu_{a,b} 
- \frac K 2 \sum_\mu \log \det \hat Q^\mu 
- \frac 1 2 \log \det \bigg( \sum_\mu \big(\hat Q^\mu\big)^{-1} \bigg)
 \nonumber
\\
-\frac {P^2} 2 F_{a,i} F_{b,i}\bigg( \sum_\mu (\hat Q^\mu )^{-1}\bigg)^{-1}_{a,b}
+\frac {\beta}{2(1 - \lambda)} \bigg\{ 
 P F_{a,i}^2 - \beta P \bigg( \sum_a F_{a,i} \bigg)^2
\bigg\}
\bigg]
\end{gather}

The remaining Gaussian integral over the $F$ produces a term which is subleading in $P$. Therefore, the leading order in the remaining integral over the $\hat Q$ is simply the first two terms in the exponent. If we now invoke the replica symmetric ansatz on the conjugate variables: $\hat Q^\mu = \hat Q$, we are left with:  

\begin{gather} 
\int \bigg[\prod_{a \leq b} \hat Q_{a,b}\bigg] 
\exp\bigg[ 
\frac i 2 PK Q_{a,b} \hat Q_{a,b}
- \frac {PK} 2 \log \det \hat Q
\bigg] 
\end{gather}

Estimating this integral by saddle point then gives the desired result. 
\end{proof}

\begin{lemma}\label{lemma:F}
Under the replica symmetric ansatz, as $\beta,P \to \infty$ and $n\to 0$, the function $\mathcal U(Q)$ is asymptotic to:

\begin{gather} 
\mathcal U(Q) \doteq \exp\bigg[nP \int \mathcal{D}_{\psi} \xi
\log \tail\bigg( \sqrt{\frac{\beta}{1 - \psi}} 
\big( R\sqrt{q_0} + \frac \kappa {r_0} - \xi \big)
\bigg)
+ \frac 1 2\beta P n
+ \mathrm{const.} 
\bigg] 
\; , 
\end{gather}

where, as above, $\mathrm{const.}$ denotes terms which do not depend on $\beta$ or $Q$, and $\mathcal H$ is the unnormalized Gaussian tail function: 

\begin{gather}
\mathcal H(x) \equiv \int_x^\infty ds e^{-s^2/2}
\end{gather}

We also use $\mathcal D_\psi \xi$ to denote the Gaussian measure:

\begin{gather}
 \mathcal D_\psi \xi \equiv   \frac{d\xi e^{-\xi^2/2(1 -\psi)} }{\sqrt{2\pi (1 - \psi)}} 
\end{gather}

\end{lemma}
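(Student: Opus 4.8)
The plan is to evaluate $\mathcal U(Q)$ of \eqref{eq:F-int} by reducing it, exactly as $\mathcal S(Q)$ was reduced in \cref{lemma:S}, to a product of decoupled single-manifold, single-replica half-line Gaussian integrals, each of which collapses to a Gaussian tail factor $\tail$. Two structural facts drive the reduction: $\mathcal U$ depends on the order parameter only through the diagonal entries $Q_{a,a}$, which under the replica-symmetric ansatz all equal $q_0$; and the labels $y^\mu$ together with the mean-field variables $F_{a,0}$ enter only through pieces that turn out to be subleading in $P$, so that $\mathcal U(Q)$ is self-averaging in $y$ and the $y$-average in the main derivation can be carried inside trivially.

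First I would introduce Fourier representations of the constraints $\delta(F_{a,0}-P^{-1}\sum_\mu y^\mu V^\mu_{a,0})$ with conjugates $\hat F_{a,0}$ rescaled by $P$, after which the integrand factorizes over the manifold index $\mu$. Each $\mu$-factor is then an integral over the $n$ variables $(V^\mu_{a,0})_{a=1}^n$ subject to the half-line constraints $V^\mu_{a,0}\ge R\sqrt{q_0}+\kappa/r_0$ (this is where $Q_{a,a}=q_0$ is used), carrying the replica-coupling weight $\exp[\tfrac{\beta^2}{2(1-\psi)}(\sum_a V^\mu_{a,0})^2-\tfrac{\beta}{2(1-\psi)}\sum_a(V^\mu_{a,0})^2]$ and a linear term $-iy^\mu\sum_a\hat F_{a,0}V^\mu_{a,0}$.

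Next I would decouple the replica-coupling term with one Gaussian field $\xi_\mu$ per manifold, using $\exp[\tfrac{\beta^2}{2(1-\psi)}(\sum_a V_a)^2]=\int\mathcal D_\psi\xi\,\exp[\tfrac{\beta}{1-\psi}\xi\sum_a V_a]$ so that $\xi$ carries variance $1-\psi$ (here $V_a$ abbreviates $V^\mu_{a,0}$). Setting $\hat F_{a,0}=0$ for the moment (justified in the last paragraph), each $(\mu,a)$ factor becomes the elementary integral $\int_{R\sqrt{q_0}+\kappa/r_0}^\infty dV\,\exp[-\tfrac{\beta}{2(1-\psi)}V^2+\tfrac{\beta\xi}{1-\psi}V]$, and completing the square turns it into $e^{\beta\xi^2/(2(1-\psi))}\sqrt{(1-\psi)/\beta}\,\tail\big(\sqrt{\beta/(1-\psi)}\,(R\sqrt{q_0}+\kappa/r_0-\xi)\big)$. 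Because the $n$ replicas of a manifold share $\xi_\mu$, each $\mu$-factor is $\int\mathcal D_\psi\xi\,\big(e^{\beta\xi^2/(2(1-\psi))}\sqrt{(1-\psi)/\beta}\,\tail(\cdots)\big)^n$; expanding $(\,\cdot\,)^n=1+n\log(\,\cdot\,)+O(n^2)$ and using $\int\mathcal D_\psi\xi\,\xi^2=1-\psi$ gives, per manifold, $\exp[n(\tfrac\beta2+\int\mathcal D_\psi\xi\,\log\tail(\cdots)+\mathrm{const})]$, where the $\tfrac\beta2$ is the Gaussian average of the completed-square exponent $\beta\xi^2/(2(1-\psi))$ and $\mathrm{const}$ absorbs the purely logarithmic-in-$\beta$ normalization. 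Multiplying over the $P$ manifolds yields precisely $\exp[nP\int\mathcal D_\psi\xi\,\log\tail(\sqrt{\beta/(1-\psi)}(R\sqrt{q_0}+\kappa/r_0-\xi))+\tfrac12\beta Pn+\mathrm{const}]$, the claimed form.

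What remains — and this is the step I expect to be the main obstacle — is to confirm that restoring the $F_{a,0},\hat F_{a,0}$ integrals changes nothing at leading order. As in the remark inside the proof of \cref{lemma:S}, integrating out the $V$ and then the $\hat F$ variables leaves a Gaussian integral over the $O(n)$ variables $F_{a,0}$; since $F_{a,0}=P^{-1}\sum_\mu y^\mu V^\mu_{a,0}$ has $O(P^{-1/2})$ fluctuations under the random signs $y^\mu$, the terms $PF_{a,0}^2$ are only $O(1)$ beside the $O(nP)$ bulk, and the relevant saddle sits at $\hat F_{a,0}=0$, where the labels drop out of the single-manifold factors entirely. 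Making this scaling estimate airtight, together with pinning down the Hubbard--Stratonovich normalization and signs so that the $\tail$ argument and the measure $\mathcal D_\psi\xi$ emerge exactly as stated, and cleanly separating the $\tfrac12\beta Pn$ term from the logarithmic-in-$\beta$ normalizations that count as $\mathrm{const}$, is where the care is concentrated.
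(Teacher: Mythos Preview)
Your plan is essentially the paper's proof: Fourier-represent the $F$-constraints, Hubbard--Stratonovich the replica coupling $(\sum_a V^\mu_{a,0})^2$ with a per-manifold field $\xi_\mu$ of variance $1-\psi$, factorize over $\mu$, complete the square to produce the $\tail$ factor, and use the $n\to 0$ identity $\mathbb E_\xi f(\xi)^n \doteq e^{n\mathbb E_\xi\log f(\xi)}$ to extract $nP\int\mathcal D_\psi\xi\,\log\tail(\cdots)$ together with the $\tfrac12\beta Pn$ from $\mathbb E_\xi[\beta\xi^2/2(1-\psi)]$.

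The one point where you and the paper diverge is exactly the step you flag as the main obstacle: the elimination of $F,\hat F$. You propose a scaling heuristic --- $F_{a,0}=P^{-1}\sum_\mu y^\mu V^\mu_{a,0}$ is $O(P^{-1/2})$ by random-sign cancellation, so $PF^2=O(1)$ is subleading. That intuition is correct in spirit but a bit delicate as stated, since the $V^\mu_{a,0}$ are integration variables, not quenched, and the $y$-average is outside. The paper instead keeps $F,\hat F$ (after imposing replica symmetry $F_a=F$, $\hat F_a=\hat F$) all the way through the $n\to0$ expansion and then evaluates the remaining two-dimensional $F,\hat F$ integral by saddle point: using the large-$\beta$ expansion of $\tail(\sqrt\beta\,x)$, the $\partial_{\hat F}$ saddle equation becomes an average over $y=\pm1$ of an odd-in-$y$ quantity, which together with $F=i\hat F$ forces $F=\hat F=0$. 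This is a cleaner and more direct way to close your acknowledged gap, and it also makes transparent why the labels drop out rather than relying on a self-averaging estimate.
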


\begin{proof}[Derivation:]
Introducing Fourier representations of the delta functions in equation \eqref{eq:F-int}:
\begin{align}
\mathbb E_y \int &\bigg[ \prod_a  d F_{a,0}d\hat F_a \prod_\mu dV^\mu_{a,0}\bigg] \exp\Bigg[
- \frac \beta {2(1 - \psi)} \big(V^\mu_{a,0}\big)^2
+ \frac {\beta^2}{2(1 - \psi)} \bigg(\sum_a V^\mu_{a,0}\bigg)^2
\bigg]
\prod_{a,\mu}\Theta\bigg(V^\mu_{a,0} - R\sqrt{q_0} - \frac \kappa {r_0} \bigg)
\nonumber
\\
&\times
\exp\bigg[\frac{P\beta}{2(1 -\psi)}
\bigg\{F_{a,i}^2
-\bigg(\sum_a F_{a,i}\bigg)^2\bigg\} 
-\frac{i\beta}{1 - \psi}\hat F_a\big(P F_{a,0}  - y^\mu V_{a,0}^\mu\big)
+ O(\log\beta)
\bigg]
\end{align} 

Note that the terms $O(\log\beta)$ will have no effect on our final answer, so we ignore them. If we now make the additional replica symmetric assumption: $\hat F_a =\hat F, F_a  = F$ and introduce a Hubbard-Stratonovich transform on each of the terms $(\sum_a V^\mu_{a,0})^2$, we have: 

\begin{align}
\int dF d\hat F
&\exp\bigg[
\frac{P\beta n}{2(1 - \psi)} F^2 
- \frac{iPn\beta}{1 - \psi} F \hat F+ O(n^2)
\bigg] 
\prod_\mu\mathbb E_{y^\mu} \int \mathcal D_\psi \xi_\mu
\\
&\times \bigg[\int dV^\mu_0 \exp\bigg[ 
-\frac{\beta}{2(1 - \psi)} \big(V^\mu_0\big)^2 + \frac{\beta}{1 - \psi} V^\mu_0 (\xi_\mu + i \hat F y^\mu) 
\bigg] \Theta\bigg(V^\mu_0 - R\sqrt{q_0} - \frac \kappa {r_0}\bigg)\bigg]^n 
\end{align}

We can factorize the integral across the $\mu$-index and absorb the $\Theta$ function into the limits of integration of the $V_0$ variables to obtain: 

\begin{align} 
\int d F d\hat F 
&\bigg[
\int \mathcal D_\psi \xi
\mathbb E_y
\bigg(
\int_{R\sqrt{q_0} + \kappa}^\infty
d V_0
\exp
\bigg[ 
- \frac \beta {2(1 - \psi)}
\big(V_{0} - \xi- i\hat F y)^2
+ \frac \beta {2(1 - \psi)} (\xi + i \hat F y)^2
\bigg]
\bigg)^n\bigg]^P
\nonumber
\\
&\times\exp\bigg[-\frac{i Pn\beta F \hat F}{1 - \psi}
+\frac{P\beta n}{2(1 -\psi)}
F^2
\bigg]
\end{align}

Using the identity $\mathbb E_x f(x)^n \doteq e^{n \mathbb E_x\log f(x)} $, which is valid in the $n\to 0$ limit, we obtain, after changing variables $V \mapsto V + \xi + i \hat F y$:

\begin{align} 
\int dF d \hat F &\exp\bigg[ 
nP \mathbb E_y \int
\mathcal D_\psi \xi \log \bigg( 
\int_{R\sqrt{q} - \xi - i\hat F y}^\infty e^{-\beta V_0^2 /2(1-\psi)}
\bigg)
+ nP \mathbb E_y 
\int \mathcal D_\psi \xi 
\frac{\beta}{2(1 - \psi)} (\xi+i\hat F y)^2
\bigg]
\\
&\times\exp\bigg[-\frac{i Pn\beta F \hat F}{1 - \psi}
+\frac{P\beta n}{2(1 -\psi)}
F^2
\bigg]
\end{align} 

Carrying out the expectations over the terms $(\xi + i\hat F y)^2 $ and changing $V\mapsto \sqrt{(1-\psi)/\beta} V$ then gives: 

\begin{align} 
\int dF d \hat F
\exp\bigg[&nP\mathbb E_y \int\mathcal D_\psi \xi
\log \tail\bigg( \sqrt{\frac{\beta}{1 - \psi}} 
\big( R\sqrt{q_0} + \kappa r_0^{-1} - \xi - i\hat F y \big)
\bigg) \nonumber
\\
&+ \frac 1 2 \beta P n 
- \frac {nP\beta}{2(1 - \psi)} \hat F^2
-\frac{i Pn\beta F \hat F}{1 - \psi}
+\frac{P\beta n}{2(1 -\psi)}
F^2
+ O(nP\log \beta) 
\bigg]  \label{eq:f0-int} 
\end{align}

We are now ready to estimate this integral by saddle point. To do so, we start by noting that the partial derivative with respect to $\hat F$ of the average over the $\log \tail$ term is:

\begin{gather} 
\sqrt{\frac{\beta}{1 - \psi}}\mathbb E_{y}\int \mathcal D_\psi \xi
\frac{
yi
\exp\bigg[
-\frac{\beta}{2(1 -\psi)}
\big(R\sqrt{q_0} + \kappa r_0^{-1} - \xi - i\hat F y\big)^2\bigg]}
{
\tail\bigg[
\sqrt{\frac{\beta}{1 - \psi}} 
\big( R\sqrt{q_0} + \kappa r_0^{-1} - \xi - i\hat F y^\mu \big)
\bigg]
}
\end{gather} 

In the $\beta\to\infty$ limit, we can use the expansion: 

\begin{gather} 
\tail(\sqrt \beta x) \doteq \begin{cases}
\frac{e^{-\beta x^2/2}}{\sqrt \beta x} & x >0 
\\
\sqrt{2\pi} & x < 0 
\end{cases}
\label{eq:tail-expansion} 
\end{gather} 

Invoking this expansion, the expression simplifies to: 

\begin{gather} 
\frac{i\beta}{1 - \psi}\mathbb E_{y}
\int_{-\infty}^{0}\frac{d\xi e^{-(\xi+R\sqrt{q_0}
+ \kappa  r_0^{-1}
- i \hat F y)^2/2(1 -\psi)} }{\sqrt{2\pi (1 - \psi)}}\xi y
\end{gather}

The saddle points over the $F, \hat F$ variables then satisfy the self-consistent equations:

\begin{gather} 
F= i\hat F 
\\
- i F -\hat F + i\mathbb E_{y}
\int_{-\infty}^{0}\frac{d\xi e^{-(\xi+R\sqrt{q_0}
+ \kappa  r_0^{-1}
- i \hat F y)^2/2(1 -\psi)} }{\sqrt{2\pi (1 - \psi)}}\xi y =0,
\end{gather}

which have the solution $\hat F = F = 0$. Replacing equation \eqref{eq:f0-int} with its value at the saddle point, we can see that the function $\mathbb E_y U(Q)$ can be stated as: 

\begin{gather}
\exp\bigg[nP\int \mathcal D_\psi \xi
\log \tail\bigg( \sqrt{\frac{\beta}{1 - \psi}} 
\big( R\sqrt{q_0} + \kappa r_0^{-1} - \xi \big)
\bigg)
+ \frac 1 2 \beta Pn 
\bigg],
\end{gather}

which is what we wanted to show.
\end{proof}
\pagebreak 

\begin{figure*}
\begin{algorithm}[H]
\caption{Capacity Estimation for Correlated Data ($\alpha_{cor}$)} \label{alg:get-cap}
\begin{flushleft}
    \textbf{Input:} $n_t:$ Number of Monte Carlo draws. $G:$ Data array of shape $P \times N \times M$ containing $M$ samples of $P$ distinct manifolds, in an ambient dimension of $N$.
    \\
    \textbf{Output:} $\alpha:$ Capacity estimate.
\end{flushleft} 
\begin{algorithmic}[1]
    \State{$L, S \gets \text{GetShapesAndCholesky}(G)$}
    \State{$\alpha^{-1} \gets \text{ZerosArray}(n_t)$}
    \For{$i$ from $1$ to $n_t$}
        \State{$y \gets Bernoulli^{\otimes P}(0.5)$}
        \State{$T \gets Normal^{\otimes P\times M+1}(0,1)$}
        \State{$\alpha^{-1}[i] \gets \min_{V \in \mathcal C(y, L) }\frac 1 P \sum_{\mu}^P ||V^\mu - T^\mu||^2 $}
    \EndFor
    \State \textbf{return} $1/\text{mean}(\alpha^{-1})$
\end{algorithmic} 
\end{algorithm} 
\end{figure*}

\begin{figure*}
\begin{algorithm}[H]
\label{alg:getshapes}
    \caption{GetShapesAndCholesky}
    \begin{flushleft} 
        \textbf{Input:} $G:$ Data array of shape $P \times N \times M$ containing $M$ samples of $P$ distinct manifolds, in an ambient dimension of $N$.\\
        \textbf{Output:} $L:$ Cholesky decomposition of the correlation tensor, reshaped into a $P (M +1) \times P (M+1)$ matrix. $S:$ An array of shape $P \times M \times M$ containing data points in their axis coordinates.
    \end{flushleft}
    \begin{algorithmic}[1]
    \State{$Ax, S \gets \text{ZerosArray}(P, N, M+1), \text{ZerosArray}(P, M, M)$}
    \For{$\mu$ from $1$ to $P$}
        \State{$c \gets M^{-1}\sum_{i=1}^{M} G[\mu, :, i]$\Comment{Get this manifold's centroid}}
        \For{$i$ from $1$ to $M$}
            \State{$G[\mu, :, i] \gets G[\mu, :, i] - c$ \Comment{Center each sample}} 
        \EndFor
        \State{$Ax[\mu,:, 0] \gets c$ \Comment{Assign centroid to leading dimension}}
        \State{$Ax[\mu, :, 1:], S[\mu] \gets \mathrm{QR}(G[\mu])$\Comment{Assign axes to the next $m$ dimensions, and get the manifold points in the manifold axis coordinates, $\mathcal{S}^\mu$}}
    \EndFor
    \State{$Ax \gets \mathrm{reshape}(Ax, (P*(M+1), N))$}
    \If{$\mathrm{Rank}(Ax) < P*(M+1)$}
        \State{$C \gets Ax Ax^T + \epsilon I$ \Comment{Perturb the diagonal by a small $\epsilon$ to make $C$ positive definite (we set $\epsilon =0.001$).}}
        \State{$L \gets \mathrm{Cholesky}(C)$}
    \Else
        \State{$Q, L^T \gets \mathrm{QR}(Ax)$\Comment{When full rank, use the QR decomposition to get $L$.}}
    \EndIf
    \State \textbf{return} $L, S$
\end{algorithmic} 
\end{algorithm} 
\end{figure*}

\clearpage

\section{Numerical Implementation of the Capacity Estimator:}

In this section, we describe how we estimate the capacity for arbitrary data manifolds. While there are several ways to parameterize the data manifolds in terms of axes and shape sets, we use the QR decomposition of the matrices containing the (centered) manifold data points to obtain a set of orthogonal axes vectors (Q), together with the coordinates of each manifold point in this orthogonal basis (R), as described in steps 1-8 of Algorithm 2. The shape sets $\mathcal S^\mu$ are then simply taken to be the manifold points in this basis. With respect to the quadratic minimization in Eqs \eqref{eq:general-T-supp} and \eqref{eq:numrcl-T} with linear constraint sets enforcing separability in Eqs. \eqref{eq:first-constraint} and \eqref{eq:C-constraint-2}, this choice is equivalent to taking the shape sets to be the convex hull of all manifold points \cite{boyd2004convex}. With these definitions in hand, the correlation tensor $C$ can then simply be taken to be the empirical covariance tensor, $C^{\mu,i}_{\nu,j}=\langle u^\mu_i, u^\nu_j \rangle.$ 

\hspace{1cm} The direct estimation of Supplementary Eq. \eqref{eq:general-T-supp} using the data manifolds parameterized as described above is a difficult problem. The main difficulty comes from the inversion of large correlation tensors, which is a highly numerically unstable operation. We sidestep this difficulty by changing variables: $(V - T)^\mu_i \mapsto \sum_{\tau,k} y^\mu L^{\mu, i}_{\tau, k} (V-T)^\tau_k$. Here $L$ is the Cholesky factorization of the correlation tensor, $C$, which satisfies $\sum_{\tau,k} L^{\mu,i}_{\tau, k} L^{\nu,j}_{\tau,k} = C^{\mu,i}_{\nu,j}$. This change of variables gives the representation: 

\begin{gather}
    \frac 1{\alpha_{cor}(\kappa)} = \overline{\int D_I T 
    \min_{V \in \mathcal C(y, L) }\frac 1 P \sum_{\mu}^P ||V^\mu - T^\mu||^2 }
    \label{eq:numrcl-T}
    \\
    \mathcal C(y, L) = \bigg\{ 
    V \in \mathbb{R}^{P\times(K+1)}:
    \min_{s^\mu \in \mathcal{S}^\mu} y^\mu\sum_{k,\tau, i}  V^{\tau}_k L^{\mu, i}_{\tau, k} s^\mu_i
    \geq \kappa 
    \bigg\} \; , 
    \label{eq:C-constraint-2}
\end{gather}

\noindent where we have defined $s^\mu_0 \equiv 1$ for convenience. We can see that estimating Eq. \eqref{eq:numrcl-T} with Monte Carlo draws of $y, T$ now only requires calculating the Cholesky factorization of the covariance. Even when $N$ is very large, this step can safely be done using the QR decomposition of the matrix containing manifold axes and centroids (Algorithm 2, line 15). Note, however, that this step requires that the correlation tensor be full rank, which may not always be the case (e.g., when $N < MP$). Therefore, when the covariance tensor is not full rank, we add a small perturbation to the diagonal of the correlation tensor and calculate the Cholesky factor directly (Algorithm 2, lines 11-14; see also \cite{note:repository}). 

\hspace{1cm} Once we have the Cholesky factorization of the covariance, we minimize the integrand of \eqref{eq:numrcl-T} using standard convex optimization routines. In this way, we can accurately estimate the capacity for arbitrary data manifolds by following the pseudocode in Algorithms (1) and (2).


\section{Gaussian Point Cloud Simulation}

 For these simulations, we used $P=80$ point cloud manifolds, each made up of the convex hulls of $40$ random vectors in $\mathbb R^{3800}$, and we averaged results over 5 runs. These vectors were the sum of two Gaussian vectors: a manifold centroid $u^\mu_0$, and a sample-specific vector, $x_j^\mu$. Each manifold was then defined as $M^\mu = \mathrm{conv}\{u_0^\mu + x^\mu_j : j \in \mathbb N_1^{40}\}$. The sample vectors and centroids were each drawn from zero-mean multivariate Gaussian distributions with block covariance matrices with the strength of the off-diagonal terms of both matrices being uniformly scaled from trial to trial (see Supplementary Fig. \ref{fig:sm-intensity} above for an example). That is, given an intensity $\gamma$, we enforced $\overline{\langle u_0^\mu, u_0^\nu \rangle} = \gamma C^{\mu,\nu}_{cent}$ for $\mu \neq \nu$, while each of the sample vectors satisfied $\overline{\langle x^\mu_j, x^\nu_i \rangle} = \gamma C^{\mu,\nu}_{samp}$ for $\mu\neq \nu$. The average norms, $||u^\mu_0||^2$ and $||x^\mu_i||^2$ (i.e., the diagonal elements of $C$) were respectively fixed to $25$ and $1.$ 
 
 \begin{figure} 
 \includegraphics[width=1.
 \textwidth]{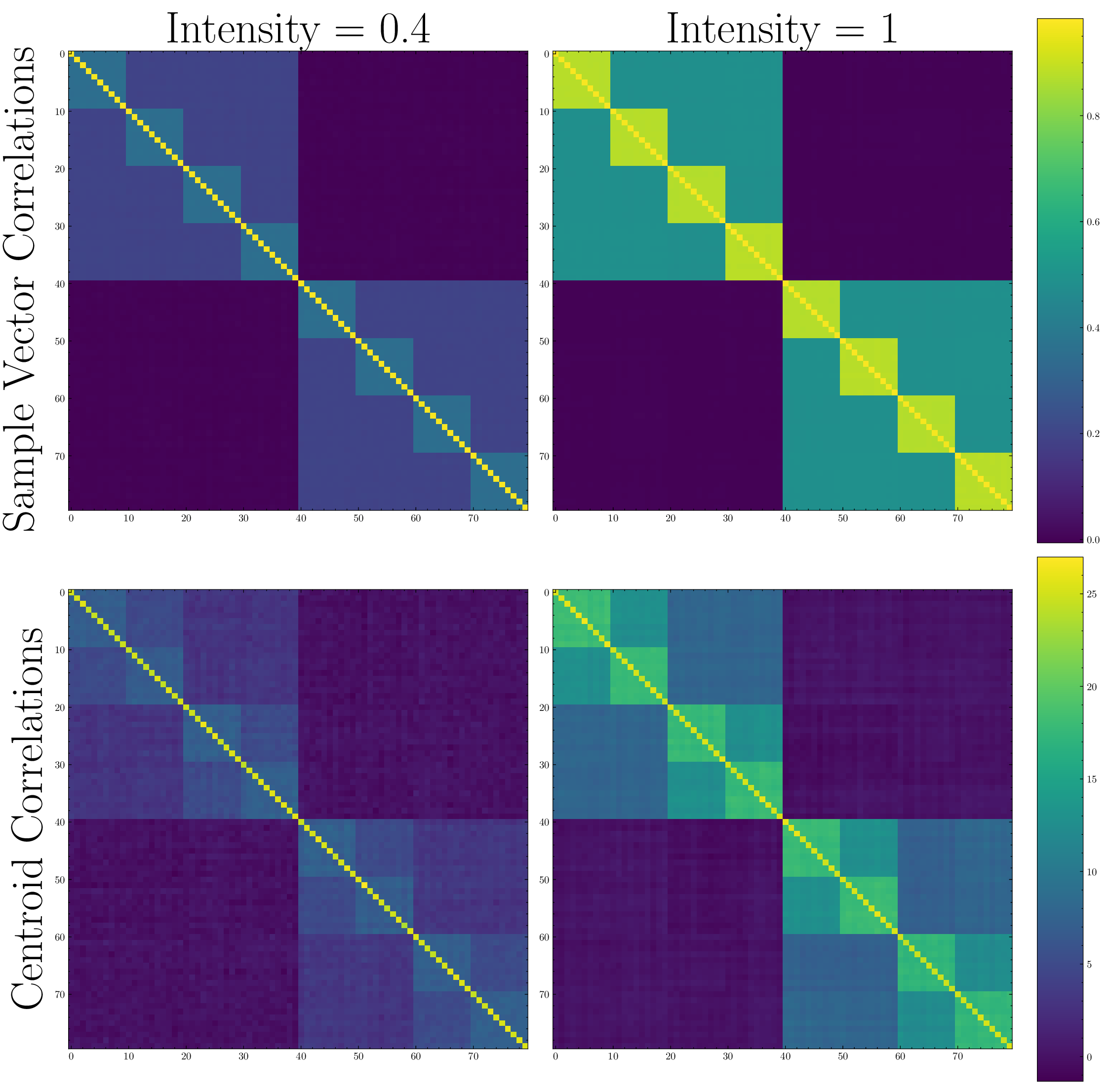}
 \caption{\label{fig:sm-intensity}
 Example of the block covariance matrices for the sample-specific vectors (top row) and centroids (bottom row) at low intensity (left column) and high intensity (right column).
 }
 \end{figure}

 \clearpage
 
\section{ResNet50 Analyses} 

We generated object manifolds from the ResNet 50 architecture trained with SimCLR \cite{chen2020simple, he2016deep} using a similar procedure as in \cite{cohen_separability_2020}. On each of the five experimental runs, we first randomly selected $P=70$ ImageNet classes and chose $45$ samples per class. We then extracted the activations from a sub-sample of 13 out of the 49 ReLU layers in the network, as well as the final average pooling operation. As described in \cite{he2016deep}, these ReLU layers are the result of applying the rectified linear non-linearity $\mathrm{ReLU}(x) = \max\{0,x\}$ elementwise to the outputs of convolutional layers. The layer width of the final average pooling layer was 2,048, while the layer widths of the ReLU layers ranged from 25,088 to 802,816. Given the size of these layers we projected the activations of the raw input and the ReLU layers onto $N=8,000$ vectors sampled randomly from the unit sphere in order to conserve memory as in \cite{cohen_separability_2020}. We then applied the low rank approximation from \cite{cohen_separability_2020}, the simulation capacity algorithm from \cite{chung_learning_2018}, and our $\alpha_{cor}$ estimator to the projected manifolds. Note that we perform the random projection before performing any of the steps described in Supplementary Section 3. Code reproducing all analyses may be found in \cite{note:repository}.
